\pgfplotsset{compat=newest} 
\newcommand{\R}{\mathbb R}
\renewcommand{\l}{\left(}
\renewcommand{\r}{\right)}
\newcommand{\lt}{\left[}
\newcommand{\rt}{\right]}
\newcommand{\ol}{\overline}
\DeclareMathOperator{\relu}{ReLu}
\newcommand{\tab}{\hspace{0.3cm}}
\newcommand{\bq}{\textasciigrave}
\newcommand{\querytxta}[1]{\textcolor{blue!80!green}{#1}}
\newcommand{\querytxtb}[1]{\textcolor{red!80!green}{#1}}
\newcommand{\querytxtc}[1]{\textcolor{green!50!black}{#1}}
\newcommand{\props}{PBFURs}
\newcommand{\prop}{PBFUR}
\Crefname{notat}{Notation}{Notations}
\Crefname{observ}{Observation}{Observations}
\title{Optimality Despite Chaos in Fee Markets}
\newcommand{\hsp}{\hspace{0.05cm}}
\author{\mbox{Stefanos Leonardos$^{1*}$ \hsp Dani\"el Reijsbergen$^{2*}$ \hsp Barnab\'e Monnot$^3$ \hsp Georgios Piliouras$^2$}}
\institute{
$^1$King's College, London, UK\\$^2$ Singapore University of Technology and Design, Singapore \\$^3$ Ethereum Foundation, Berlin, Germany \\[0.2cm] $^*$The first and second authors contributed equally.}
\authorrunning{Leonardos, Reijsbergen, Monnot, and Piliouras.}
\begin{document}

\maketitle

\begin{abstract}

Transaction fee markets are essential components of block\-chain economies, as they resolve the inherent scarcity in the number of transactions that can be added to each block. In early blockchain protocols, this scarcity was resolved through a first-price auction in which users were forced to guess appropriate bids from recent blockchain data. Ethereum's EIP-1559 fee market reform streamlines this process through the use of a base fee that is increased (or decreased) whenever a block exceeds (or fails to meet) a specified target block size. Previous work has found that the EIP-1559 mechanism may lead to a base fee process that is inherently chaotic, in which case the base fee does \textit{not} converge to a fixed point even under ideal conditions. However, the impact of this chaotic behavior on the fee market's main design goal -- blocks whose long-term average size equals the target -- has not previously been explored. As our main contribution, we derive near-optimal upper and lower bounds for the time-average block size in the EIP-1559 mechanism despite its possibly chaotic evolution.
Our lower bound is equal to the target utilization level whereas our upper bound is $\approx\!6\%$ higher than optimal. Empirical evidence is shown in great agreement with these theoretical predictions. Specifically, the historical average was $\approx\!2.9\%$ larger than the target rage under Proof-of-Work and decreased to $\approx\!2.0\%$ after Ethereum's transition to Proof-of-Stake. We also find that an approximate version of EIP-1559 achieves optimality even in the absence of convergence.

\end{abstract}

\section{Introduction}\label{sec:intro}

% \textcolor{red}{Here is a problem.}

In the seminal Bitcoin whitepaper \cite{nakamoto2008bitcoin}, the concept of a \textit{blockchain} was introduced as a secure data structure maintained by \textit{nodes} in a peer-to-peer network.
A blockchain consists of elementary database operations called \textit{transactions} that modify a global state -- e.g., cryptocurrency ownership or the state of smart contracts.
Nodes provide an essential service to the blockchain's users by broadcasting their transactions and responding to queries about the global state \cite{gencer2018decentralization}. As such, a large and diverse network of nodes enhances \textit{decentralization} in the sense that the availability and integrity of blockchain-enabled services do not depend on a handful of entities.
As nodes execute every new transaction to maintain their view of the latest global state, the \textit{computational cost} of new transactions must be limited to avoid excluding all but the most powerful nodes.
In Ethereum \cite{buterin2014next}, this computational cost is measured through the notion of \textit{gas}, and each block has a gas \textit{limit} that is decided by the nodes.
In Ethereum's original design, each transaction has a \textit{gas price} that indicates how much its creator is willing to pay for its inclusion on the blockchain. This mechanism behaves like a \textit{first-price auction}, and shares all of its drawbacks \cite{Nis07}: users tend to bid untruthfully relative to the true valuation of their transaction, which leads to guesswork and overbidding that is detrimental to the user experience.

% \textcolor{red}{It is an important problem.}

% EIP-1559 is arguably the most profound change to transaction pricing in the blockchain economy since the creation of Bitcoin. 
\textit{Ethereum Improvement Proposal (EIP) 1559} \cite{Con19} simplifies Ethereum's fee market through a protocol-set \textit{base fee}. Instead of aiming to fill each block to the limit, it aims to achieve a long-term average \textit{target}, which is half the maximum size of each block.
The base fee is automatically updated to reflect market conditions: if a block is larger than the target, then demand for transaction inclusion is too high at the current price so the base fee is increased (and vice versa for smaller blocks).
The base fee hence aims to reflect the constantly-changing \textit{market-clearing price}, which is the theoretical price at which demand for transactions is precisely such that block sizes equal the target.
To add a transaction to the blockchain, users pay the base fee per unit of spent gas -- this payment is permanently destroyed or \textit{burned} \cite{karantias2020proof}.
This mechanism is provably incentive-compatible in the sense that users bid close to their true valuation unless demand is extremely high \cite{Rou20,roughgarden2021transaction}. 
However, whether the protocol is \textit{optimal} in the sense that it achieves its main design goal -- a long-term average block size that equals the target -- has not previously been explored.
Previous work has found that the base fee may \textit{not} converge to the market-clearing price \cite{Leo21} even when market conditions remain unchanged, as the base fee process exhibits (Li-Yorke) \textit{chaos} in a wide range of market conditions. As the base fees need not converge to the market-clearing price, it is natural to ask whether the long-term average block sizes in fact converge to the target.

\begin{figure}[t]
    \centering
    \includegraphics[width=0.8\linewidth]{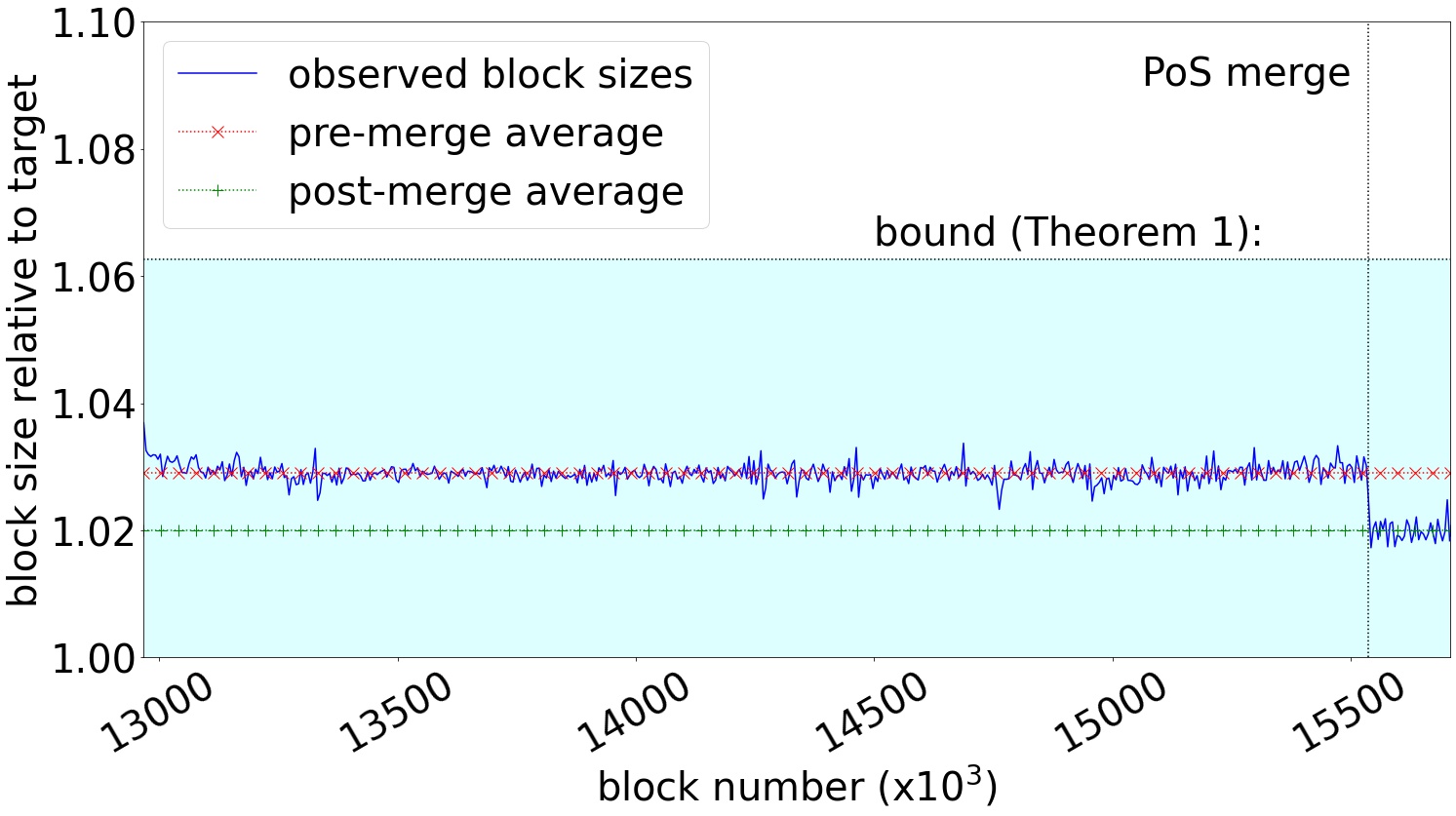}
    \caption{Evolution of the relative block size since EIP-1559: the blue line without marks depicts the observed average block size over batches of 5000 consecutive blocks. The red and green lines with marks depict the averages over the periods before and after Ethereum's switch to proof-of-stake, respectively. The colored region indicates the range of potential long-run averages covered by the bound of \Cref{thm:main}.}
    \label{fig:empirical_averages}
\end{figure}

In the current work, we investigate whether optimality is possible in fee markets that exhibit non-convergent behavior. 
Specifically, we show that the default EIP-1559 mechanism is \textit{approximately optimal} even if the block sizes are chaotic. We find that, unless market conditions are such that the base fee converges to a fixed point, EIP-1559 1) overshoots the target but 2) by at most $\approx$$6.27\%$. 
These results hold \textit{regardless} of the specific market conditions beyond convergence, or the nature of the block creation protocol. This suggests that we can still analyze the system
even if it does not reach an equilibrium, which is a very stringent condition to be met in practice. Furthermore, our results allow us to quantify the maximum degree to which excessively large blocks impact nodes with limited processing power.
We have validated our theoretical results using historical data, as displayed in \Cref{fig:empirical_averages} (see \Cref{sec:empirical_evaluation}): since its introduction, blocks in EIP-1559 were initially $\approx$$2.9\%$ larger than the target, and after a major change to Ethereum's block creation protocol\footnote{In particular, Ethereum's consensus protocol switched form Proof-of-Work to Proof-of-Stake (PoS), which, as a by-effect, caused the time between the creation of new blocks to become constant.} (the PoS ``Merge'' \cite{themerge}) this overshoot dropped to $\approx$$2.0\%$. Both of our main theoretical findings -- blocks overshoot the target, but to a limited degree -- have therefore been borne out in practice. Moreover, the persistence of excessive block sizes throughout the observation period suggests that this is not merely a honeymoon effect \cite{Rei21}. We also observe the tightness of our bound in a wide range of simulation experiments (see \Cref{app:experiments}). \par

In practice, the baseline variant of EIP-1559 is a linear approximation of an exponential update rule that is computationally inefficient to implement \cite{Fei22}. As a further contribution, we investigate the average-case performance of this ideal mechanism, deemed \emph{exponential} EIP-1559. Our analysis suggests that {exponential} EIP-1559 \textit{always} achieves the long-term average target, even if the base fee does not converge to the market-clearing price. We find that the manner in which EIP-1559 approximates an exponential function creates the observed overshoot -- as such, this suggests an interesting direction for future protocol updates.\footnote{In \Cref{app:proper}, we discuss further such designs including a recent proposal that sets base fees using the principles behind Automated Market Makers (AMMs) \cite{vitalik2021amm} and general dynamic posted price mechanisms \cite{ferreira2021dynamic}.}

\paragraph{Outline:} The outline of our work is as follows. After a discussion of the context of our work (\Cref{sec:background}) and our model of a blockchain economy (\Cref{sec:model}), we present a unifying framework for the dynamics of different transaction fee market mechanisms, including the default EIP-1559 mechanism, its exponential variation, and alternative proposals (\Cref{sec:mechanisms}). In \Cref{sec:long_nerm_analysis}, we present our formal analysis. 
%of the long-term average block size in these mechanisms. We find that the baseline variant of EIP-1559 does \textit{not} satisfy this property: our analysis shows that it unless market conditions are such that the base fee converges to a fixed point, EIP-1559 overshoots the target by up to 6\%. By contrast, ``exponential'' EIP-1559 and AMM always achieve the long-term average target. 
In \Cref{sec:discussion}, we discuss the generality of our results, and \Cref{sec:conclusions} concludes the paper.
%we use empirical data to justify our analytical results: since its introduction, blocks in EIP-1559 have on average been $2.9\%$ larger than the target. 

% \subsection*{Contributions}

% In summary, our main contributions are as follows:

% \begin{itemize}

%     \item We present a unifying framework for the dynamics of different transaction fee market mechanisms, including EIP-1559 and alternative proposals. 
%     % We define the most prominent recent fee market proposals within this framework, and present several common design goals.
    
%     \item We analyze whether these mechanisms achieve the fundamental property that the long-term average block size equals the target. We find that EIP-1559 does not meet this goal, 
    
%     \item We provide empirical evidence to support this.

% \end{itemize}

\section{Background and Related Work}\label{sec:background}

In this section, we provide a high-level description of Ethereum's fee market and EIP-1559, and present some concepts that we have not discussed previously. We conclude the section with an overview of related work on fee markets.

\subsubsection*{Ethereum: }
Ethereum is a cryptocurrency platform that supports \textit{smart contracts}, i.e., software programs that are executed in a decentralized network. 
Ethereum's global state consists of the state of all smart contracts and the amount of Ethereum's native cryptocurrency token -- \textit{Ether} or ETH -- in each user account.
The purpose of Ethereum transactions is to transfer ETH from one account to another, or to create or call a smart contract. A selection of nodes have the ability to periodically group transactions into a new \textit{block} and broadcast it to the network. Although the exact nature of these nodes depends on the consensus mechanism, we will refer to such nodes as ``\textit{miners}'' for brevity.
Each block points to a previous block, forming a \textit{blockchain}. 
Each operation in a transaction consumes gas, and the amount of ETH that a user is willing to pay for each unit of gas depends on the user's \textit{valuation} of the transaction -- i.e., how much utility she expects to derive from the transaction's inclusion on the blockchain. Before EIP-1559, users would specify a {gas price} for each transaction that determines the amount of ETH spent per unit of gas. Demand for gas fluctuates over time, e.g., due to temporal patterns and events such as NFT drops, so non-expert users were forced to guess appropriate gas prices from recent data.
% When a new node joins the network, it processes all transactions starting from the start of the blockchain to compute the global state. Meanwhile, each node that is up-to-date processes the transactions from  new blocks as they are received. 

% \paragraph{Gas Gosts and Transaction Fees:} 
% Some transactions are inherently more expensive to process by the nodes' CPUs than others, which Ethereum's protocol formalizes  through the concept of \textit{gas}: each operation in a transaction has a \textit{gas cost} that corresponds to its computational cost. The processing power of nodes is limited, so Ethereum's protocol requires that the miners enforce a limit on the total number of gas that can be spent in each block. As the (average) number of blocks that are created per unit of time is also limited, gas is inherently a scarce resource for which demand exceeds supply in normal circumstances. This scarcity is resolved through Ethereum's \textit{fee market}, which allows users to bid for inclusion of their transaction by spending fees (in terms of ETH) for each unit of gas spent. Before EIP-1559, the fee market functioned as a first-price auction, i.e., users would specify a \textit{gas price} for each transaction that determines the amount of ETH spent per unit of gas. However, 

\subsubsection*{EIP-1559: }
EIP-1559 simplifies Ethereum's original fee market design through the use of a dynamically adjusted {base fee}. The base fee at each time serves as a \textit{posted price} that users need to pay to have their transaction processed at the next block. 
When users pay a transaction fee, an amount of ETH equal to the base fee is burned -- however, a small amount of ETH can be awarded to the miner by the user in the form of a \textit{miner's tip}. Without the miner's tip, miners would have no incentive to process transactions, which could cause them to create empty blocks instead. The base fee is continuously updated to reflect changing market conditions: if blocks are larger (smaller) than a fixed \textit{target} size, then the base fee is increased (decreased) to reduce (increase) demand. The target has been set to roughly equal the maximum block size before EIP-1559 (i.e., 15M gas) -- meanwhile, the maximum block size has been increased to 30M gas (i.e., twice the target). A higher maximum block size (relative to the target) increases the risk that nodes are overwhelmed during demand bursts.

\subsubsection*{Related work: }

In \cite{Rou20,roughgarden2021transaction}, three desirable properties for transaction fee markets are investigated: whether (1) users are incentivized to bid their true valuation, (2) miners are incentivized to follow the protocol's inclusion rule, and (3) miners and users have no incentive to form cartels to subvert the protocol. It is shown that EIP-1559 always satisfies property (1) and (3), and (2) only when demand is ``stable'' \cite{roughgarden2021transaction}. Ethereum's original fee market (being a first-price auction) does not satisfy property (1), whereas property (3) would not hold if the base fee were awarded to miners instead of burned. The compatibility of these properties under general conditions is further explored by \cite{chung2021foundations,shi2022can} and \cite{Gaf22} in the context of transaction fee markets and by \cite{Mil22} in the context of NFT auctions.

In \cite{monnot2020ethereums,Leo21}, the behavior of the base fee under stable market conditions is investigated -- it is found that the base fee exhibits Li-Yorke chaos \cite{Leo21}, which in practice results in the prevalence of sequences of alternating full and empty blocks. This behavior was later confirmed to occur in practice \cite{Rei21}. In \cite{ferreira2021dynamic}, the \textit{social welfare} of fee market mechanism is investigated, and two alternative mechanisms to EIP-1559 are proposed that perform better on this metric. In \cite{liu2022empirical}, the impact of EIP-1559 on various user-centric measures such as average transaction fees, waiting times, and consensus security is investigated. In \cite{diamandis2022dynamic}, an extension of EIP-1559 is considered in a setting in which the base fee depends on the availability of multiple fungible resources (i.e., beyond gas use).
Finally, fee market design has been studied for other cryptocurrency platforms, e.g., Bitcoin \cite{basu2019stablefees,lavi2022redesigning}.

The question of whether the fee market ensures that the long-term average block size indeed equals the target size has not been considered in these works, although \cite{liu2022empirical} finds that the average size of blocks as measured in terms of the network load (which does not capture the gas use of, e.g., smart contract function calls) has increased from 64.05 to 78.01 kB. 

\section{Model and Notation}\label{sec:model}

In this section, we introduce our notation to describe transaction fee markets mathematically. We have EIP-1559 in mind (cf. \Cref{sec:background}), but the description applies to variations of EIP-1559 and other related mechanisms as well.

%consider a blockchain-enabled economy in which users make transactions over a distributed network. Users submit their transactions to a common pool together with a bid which specifies how much they are willing to pay for the computational resources that are required for their transactions to be processed. The transactions, along with the bids, are viewed by miners who select which transactions to process. Miners order the selected transactions into blocks and broadcast them to the network. Subsequently, the network reaches consensus on the updated state of the blockchain economy and the process repeats. \par

\paragraph{Base fee and target block size:}
% Typically, block sizes are measured in \emph{gas} units which express the amount of computational resources required to execute a transaction. 
The main element of EIP-1559 like transaction fee markets is a dynamically adjusted \emph{base fee}, $b_n$, that is updated after every block, $B_n, n\ge0$. The goal of the mechanism is to update the base fee in such a way that blocks achieve a pre-defined target block size. Let $T$ denote the target block size and let $kT$ denote the maximum block size, for some integer $k\ge 1$. Currently, in the Ethereum blockchain, $k$ is set at $k=2$, i.e., the target is equal to half the maximum block size. 
\paragraph{Users (transactions) and valuations:}
Users (transactions) arrive to the pool according to a stochastic process. Without any loss of generality, we assume that each transaction uses $1$ unit of gas, and use the random variable $N_n$ to describe the number of transactions that arrive between two consecutive blocks $B_n,B_{n+1}$ for $n\ge0$. 
% We assume that $N_n$ follows the Poisson distribution with parameter $\lambda_n T$ for any $n>0$. 
Let $\lambda_n = \mathbb{E}(N_n) / T$ be the ratio of the expected value of $N_n$ to the target $T$.
We make no assumptions about the distribution of $N_n$ beyond it having a finite mean, i.e., $ \lambda_n < \infty $.
To avoid trivial cases, we will assume that $\lambda_n > 1$, i.e., that the arrival rate is larger than the target block size. For the theoretical analysis, we will assume that users leave the pool if their transaction is not included in the next block and return according to the specified arrival process.\footnote{This assumption only reduces unnecessary complexities in the analysis and is relaxed in the simulations without significant effect in the results.} Whenever necessary, we will index users (transactions) with $i,j\in\mathbb N$.
\paragraph{Users' valuations:}
Each transaction, indexed by $i \in \mathbb N$, has a valuation, $v_i$. Valuations at time $t$ are drawn from a distribution function $F_n$ with support included in $\R^+$. Typically, the support of $F_n$ is bounded, i.e., there exists a maximum possible valuation $M\gg 0$.\footnote{While unbounded valuations are unrealistic for practical purposes, we note that our results hold even for such cases.} We will write $\ol{F}_n(x):=1-F_n(x)$ to denote the so-called \emph{survival function} of the distribution $F_n$. For simplicity, we will assume that $\lambda_n \equiv \lambda$ and $F_n \equiv F$ for all $n \geq 0$, i.e., that $\lambda$ and $F$ are independent of the block height. We discuss how our results are straightforwardly generalized to a setting with time-dependent distributions in \Cref{sec:discussion}.
\paragraph{Bids and tips:}
User bids in EIP-1559 consist of two elements: (1) the \emph{max fee}, $f$, which is the maximum amount per gas unit that the user is willing to pay for their transaction to be included and (2) the \emph{max priority fee}, $p$, which is the maximum tip per gas unit that the user is willing to pay to the miner who includes their transaction. We assume that users bid truthfully and rationally, i.e., a user with valuation $v_i$ will bid $(f,p)=(v_i,\epsilon)$ where $\epsilon>0$ is the minimum amount that covers the miners' cost to process the transaction. Combining with the above, this generates the inclusion requirement: $\text{miner's tip}=\min{\{f-b_n,p\}}\ge\epsilon.$

\paragraph{Block Sizes:}
Let $g_n := g(b_n)$ denote the number of transactions that get included in block $B_n$ when the base fee is equal to $b_n, n\ge0$. Given a number of transactions $N_n = n$ and valuations $v_{1},\ldots,v_{n}$, we have that
$g(b_n) = \min\left\{kT, \sum_{i=1}^{n} {\bf 1}(v_i \geq b_n)\right\},$
i.e., the size of the block is equal to minimum between the block limit, $kT$, and the number of transactions whose max fee exceeds the base fee.\footnote{To simplify notation, we henceforth assume that each user's priority fee, $p$, is equal to the miners' breaken even cost $\epsilon$. Equivalently, we only consider transactions that miners are willing to include and hence, we apply the indicator to $b_n$ instead of $b_n+\epsilon$.} We note that ${\bf 1}(v_i \geq b_n)$ has a Bernoulli distribution with probability $\mathbb{P}(v_i \geq b_n) = \ol{F}(b_n)$ of observing $1$.
For our analysis, we will consider the mean-field approximation of the stationary demand, which results in 
\begin{equation}
\begin{split}
g(b_n) & = \min\left\{kT, \sum_{i=1}^{N_n}\nolimits {\bf 1}(v_i \geq b_n)\right\} = \min\left\{kT, \lambda T \ol{F}(b_n)\right\}
\end{split}
\label{eq:gas_use}
\end{equation}
We denote the market-clearing price, i.e., the base fee for which $g(b^*) = T$, by $b^*$. From \eqref{eq:gas_use}, we observe that $b^* = \bar{F}^{-1}\l1/\lambda\r$.
%\item $v_{i\,t}$ is the valuation of the $i$th transaction included in block $B_n$
Equation \eqref{eq:gas_use} also implies that $\lim_{b_n \downarrow 0} g(b_n) =kT$ since $\lambda>k$ by assumption, and $\lim_{b_n \rightarrow \infty} g(b_n) = 0$. To reflect practical settings, we will assume that $b_n$ cannot become negative.

% $7$Wei (monetary unit in the Ethereum blockchain).\footnote{To avoid trivial settings, it is reasonable to assume that at if the base fee remains for \emph{enough blocks} at the lowest possible value of $7$Wei, then there is sufficient demand to generate more than half-full blocks that will subsequently increase the base fee. If such a (minimal) condition is not met, then congestion control mechanisms like the ones studied here are, essentially, not necessary.}

\section{Fee Market Mechanisms: Base Fee Update Rules}\label{sec:mechanisms}
Base fee update rules (BFURs), are functions $h : (0,\infty) \rightarrow (0,\infty)$.
Their goal is to efficiently regulate block sizes via updates in the base fee, $b_{n+1} := h(b_n)$ with $b_0>0$. Intuitively, the base fee increases (decreases) whenever blocks are more (less) than the target and remains constant otherwise. \par

\paragraph{Design goal:} Achieving the target block size in \emph{each} block, however, turns out to be a very difficult \cite{liu2022empirical,Rei21} or even theoretically impossible goal \cite{Leo21}. To obtain a more tractable objective, it is still reasonable to ask whether the target block size is achieved \emph{on average}. In symbols, let $G_N := \frac{1}{N} \sum_{n=1}^N g_n$ denote the average block size until block $N>0$. Then, this requirement suggests that \[\lim_{N \rightarrow \infty} G_N = T.\] 

\subsection{Proper Base Fee Update Rules}\label{sub:proper}
To avoid pathological cases, a base fee update rules $h$ needs to satisfy some minimal regularity conditions. These our outlined in \Cref{def:proper}.\footnote{Apart from the current base fee, $b_{n}$, a base fee update rule may also depend on other parameters, $\theta_n$, such as the target block size (time-independent) or the block size and efficient gas prices at time $n$ (time-dependent). Whenever irrelevant, we will omit such parameters from the description of $h$. In \Cref{app:proper}, we provide the generalized counterpart of \Cref{def:proper} that accounts for such dependencies.}

\begin{definition}[Proper Base Fee Update Rules (\props)]\label{def:proper}
Let $T$ be the target block-size. An update rule $h : (0,\infty) \rightarrow (0,\infty)$ is called \textit{proper} if it satisfies the following
\begin{itemize}[leftmargin=*,align=left]
\item[(A.1)] \emph{non-divergence}: $h(b_n) \geq b_{n}$ if $g(b_n) \geq T$  and $h(b_n) \leq b_{n}$ if $g(b_n) \leq T$.
\item[(A.2)] \emph{bounded relative differences}: there exists an $\alpha\geq1$ such that $\alpha^{-1} b_n \leq h(b_n) \leq \alpha b_n$.
\end{itemize}
\end{definition}
Assumption (A.1) ensures that base fee updates are in the right direction, i.e., that the base fee does not decrease (increase) whenever the current block size is more (less) than the target. 
% An immediate consequence of (A.1) is that $g(b_n) \geq T$ iff $b_n \leq b^*$ and $g(b_n) \leq T $ iff $ b_n \geq b^*$. 
Assumption (A.2) excludes update rules with potentially unbounded updates. To include more general update rules, (A.2) can be relaxed to (the equivalent in flavor): 
\begin{itemize}[leftmargin=*,align=left]
    \item[\textit{(A.2$'$)}] \emph{there exists $\alpha>1$ and $\beta>0$ such that $\alpha^{-1}b_n - \beta \leq h(b_n) \leq \alpha b_n + \beta$.}
\end{itemize}
\props{} have the desirable property that they generate a bounded sequence of base fees. This is established in \Cref{lem:proper} which can be proved by induction (cf. \Cref{sub:omitted}).

\begin{lemma}\label{lem:proper}
If $h$ is a \prop, and $0<b_0<\infty$, then
$$\min\{b_0,\alpha^{-1} b^*\} \leq b_n \leq \max\{b_0, \alpha b^*\} \;\;\textnormal{ for all }\;\; n\geq 0.$$
\end{lemma}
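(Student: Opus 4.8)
The plan is to prove the claim by induction on $n$, using the monotonicity of the block-size function $g$ to reduce the two-sided condition (A.1) to a single comparison between $b_n$ and the market-clearing price $b^*$. Write $L := \min\{b_0,\alpha^{-1}b^*\}$ and $U := \max\{b_0,\alpha b^*\}$. The base case $n=0$ is immediate, since $L \le b_0 \le U$ by definition of the minimum and maximum. The one structural fact I would record first is that, by \eqref{eq:gas_use}, $g(b) = \min\{kT,\lambda T\,\ol{F}(b)\}$ is non-increasing in $b$ because the survival function $\ol{F}$ is non-increasing; combined with $g(b^*)=T$, this gives $g(b_n)\ge T$ whenever $b_n\le b^*$ and $g(b_n)\le T$ whenever $b_n\ge b^*$. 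This is the bridge that lets me invoke (A.1) purely in terms of the sign of $b_n-b^*$.

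For the inductive step, assume $L\le b_n\le U$ and split into cases according to whether $b_n$ lies above or below $b^*$. For the upper bound: if $b_n\le b^*$, then (A.2) gives $h(b_n)\le \alpha b_n\le \alpha b^*\le U$; if instead $b_n> b^*$, then $g(b_n)\le T$, so (A.1) gives $h(b_n)\le b_n\le U$ by the inductive hypothesis. Symmetrically, for the lower bound: if $b_n\ge b^*$, then (A.2) gives $h(b_n)\ge \alpha^{-1}b_n\ge \alpha^{-1}b^*\ge L$; and if $b_n< b^*$, then $g(b_n)\ge T$, so (A.1) gives $h(b_n)\ge b_n\ge L$ by the inductive hypothesis. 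In every case $b_{n+1}=h(b_n)$ remains in $[L,U]$, which completes the induction.

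The argument is essentially bookkeeping once the right case split is chosen, so I do not expect a serious obstacle; the only point requiring care is to apply each assumption on the correct side of $b^*$. Intuitively, (A.1) governs the \emph{outer} region where $b_n$ has already passed $b^*$ --- there the update contracts back toward the target, so the inductive hypothesis carries over unchanged --- while (A.2) governs the \emph{inner} region, bounding how far a single multiplicative step can overshoot $b^*$, namely by at most the factor $\alpha$. I would also note that the same proof goes through verbatim under the relaxed assumption (A.2$'$), merely replacing the bounds $\alpha b^*$ and $\alpha^{-1}b^*$ by $\alpha b^*+\beta$ and $\alpha^{-1}b^*-\beta$ respectively.
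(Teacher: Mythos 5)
Your proof is correct and follows essentially the same route as the paper's: induction on $n$ with a case split on whether $b_n$ lies above or below $b^*$, applying (A.1) on the side where the update moves back toward $b^*$ and (A.2) on the side where it may overshoot. Your explicit justification that monotonicity of $g$ together with $g(b^*)=T$ converts the sign of $b_n-b^*$ into the hypothesis of (A.1) is a detail the paper leaves implicit, but it does not change the argument.
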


\subsection{Examples of PBFURs}

\subsubsection{EIP-1559:}
In the EIP-1559 transaction fee market \cite{monnot2020ethereums,Leo21,Rei21}, the base fee, $b_n$, is updated after every block (where blocks are indexed by their block height, $t>0$) according to the following equation 

\begin{equation}\label{eq:eip1559}
b_{n+1}=b_n \l 1+d\cdot\frac{g_n-T}{T}\r, \qquad \text{for any } t\in\mathbb N,\tag{EIP-1559}
\end{equation}
where $d$ denotes the \emph{adjustment quotient} (or step-size or learning rate), currently set by default at $d=0.125$. It will be convenient to use the notation \mbox{$y_n:=\frac{g_n-T}{T} \in [-1,1]$}, for the \emph{normalized deviation at block $t$}, and 
\begin{equation}\label{eq:average}
    G_N:=\frac{1}{N}\sum_{n=1}^N\nolimits g_n
\end{equation} for the \emph{average block size up to block $N$}. It is immediate to check that \eqref{eq:eip1559} satisfies both A.1 and A.2 and is, thus, a \prop.

\subsubsection{Exponential EIP-1559:}
Instead of the \eqref{eq:eip1559} updates, we may consider the exponentially weighted updates
\begin{equation}\label{eq:exponential}
b_{n+1}=b_n \l 1+d\r ^{\l\frac{g_n-T}{T}\r}, \qquad \text{for any } t\in\mathbb N.\tag{EXP-1559}
\end{equation}
The standard \eqref{eq:eip1559} update rule is the linear approximation (in the Taylor expansion of the function $d\mapsto (1+d)^{y_n}$) of the update rule in equation \eqref{eq:exponential}. Again, it is immediate to check that \eqref{eq:exponential} satisfies non-divergence (A.1) and bounded relative differences (A.2) and is, thus, a \prop. Using the \href{https://en.wikipedia.org/wiki/Bernoulli\%27s_inequality#Generalization_of_exponent}{generalized Bernoulli inequality}, it is also straightforward to show that the updates of \Cref{eq:exponential} are always less aggressive than (but in the same direction as) the updates of \Cref{eq:eip1559}. In other words, if a block is congested ($y_n>0$), then the next base fee will increase with both methods, but it will be higher with \eqref{eq:eip1559}. Similarly, if a block is not congested ($y_n<0$), then the next base fee will decrease with both methods, and it will be lower with \eqref{eq:exponential}.\footnote{Ethereum researchers \cite{Fei22}, also study an alternative exponential EIP-1559 form which relies on the exponential approximation $1+dy_n\approx e^{dy_n}$ for $dy_n$ small enough. Again, it is immediate to see that this rule satisfies (A.1) and (A.2) and is, thus, a \prop.}

\section{Analysis: Bounds on Average Block Sizes}\label{sec:long_nerm_analysis}

\subsection{EIP-1559}\label{sub:bounds}
In this section, we are interested to obtain lower and upper bounds on the long-term average block sizes, $G_N, N\ge0$, generated by the EIP-1559 update rule, cf. equation \eqref{eq:eip1559}. Our main result is summarized in \Cref{thm:main}.

\begin{theorem}\label{thm:main}
Let $(g_n)_{t>0}$ denote the sequence of block sizes generated by the base fee $(b_n)_{t>0}$ of the EIP-1559 update rule \eqref{eq:eip1559} with learning rate $d\in(0,1)$ for an arbitrary valuation distribution on $\mathbb R_+$. Then, the long-term average block size $\lim_{N\to+\infty}G_N$ satisfies
\begin{equation}\label{eq:main}
T\le \lim_{N\to+\infty}G_N\le \lt 1-\frac{\ln{(1+d)}}{\ln{(1-d)}} \rt^{-1} 2T.
\end{equation}
\end{theorem}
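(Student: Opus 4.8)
The plan is to reduce the whole statement to a single Cesàro average and then sandwich it between two affine bounds produced by a convexity argument. Writing $y_n := (g_n-T)/T \in [-1,1]$, we have $G_N = T\l 1 + \frac1N\sum_{n=1}^N y_n\r$, so it suffices to bound the average $\bar y_N := \frac1N\sum_{n=1}^N y_n$ below by $0$ and above by $C-1$, where $C := \lt 1-\frac{\ln(1+d)}{\ln(1-d)}\rt^{-1}2$ is the constant of \eqref{eq:main}. Since the block sizes may be chaotic I would not assume $\lim_N G_N$ exists: instead I would bound $\liminf_N G_N$ from below and $\limsup_N G_N$ from above, which yields \eqref{eq:main} whenever the limit exists.

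The key preliminary is to control the average logarithmic increment. Taking logs of the telescoping product $b_{N+1}=b_0\prod_{n=0}^{N}\l 1+dy_n\r$ gives $\sum_{n=0}^{N}\ln\l 1+dy_n\r = \ln\l b_{N+1}/b_0\r$. Because \eqref{eq:eip1559} is a \prop, \Cref{lem:proper} forces $(b_n)$ to stay inside a fixed interval bounded away from $0$ and $\infty$, so $\ln\l b_{N+1}/b_0\r$ is bounded uniformly in $N$, and therefore
\[ S_N := \frac1N\sum_{n=0}^{N-1}\ln\l 1+dy_n\r \longrightarrow 0 \quad\text{as } N\to\infty. \]
This is the only place where the structure of the mechanism enters; the remainder uses nothing about the valuation distribution beyond $y_n\in[-1,1]$, which is what makes the bound hold for an arbitrary $F$.

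Both bounds then follow from the convexity of $\psi(u):=\l e^{u}-1\r/d$, which satisfies $\psi\l\ln(1+dy)\r = y$ and maps $\lt \ln(1-d),\,\ln(1+d)\rt$ onto $[-1,1]$. For the lower bound I would use the tangent inequality at $u=0$, namely $\psi(u)\ge \psi(0)+\psi'(0)u = u/d$ (equivalently $\ln(1+dy)\le dy$); substituting $u=\ln(1+dy_n)$, averaging, and letting $N\to\infty$ gives $\liminf_N \bar y_N \ge \frac1d\lim_N S_N = 0$, i.e.\ $\liminf_N G_N \ge T$. For the upper bound I would use the secant inequality: since $\psi$ is convex it lies below the chord joining $\l\ln(1-d),-1\r$ and $\l\ln(1+d),1\r$, so $\psi(u)\le c_1 + c_2 u$ on the whole interval, with $c_2 = 2/\l\ln(1+d)-\ln(1-d)\r>0$ and
\[ c_1 = -1 + \frac{-2\ln(1-d)}{\ln(1+d)-\ln(1-d)}. \]
Substituting $u=\ln(1+dy_n)$ and averaging yields $\bar y_N \le c_1 + c_2 S_N$, hence $\limsup_N \bar y_N \le c_1$ and $\limsup_N G_N \le (1+c_1)T$.

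It then remains to check the algebraic identity $1+c_1 = \frac{-2\ln(1-d)}{\ln(1+d)-\ln(1-d)} = C$, a routine simplification after clearing denominators. The main conceptual obstacle — the step that pins down the exact constant — is the choice of secant: the chord must be anchored at the two extreme normalized deviations $y=\pm1$ (full and empty blocks), which is precisely the configuration that maximizes $\sum_n y_n$ subject to $S_N\to0$. Intuitively, the worst case for overshoot is a chaotic orbit alternating between full and empty blocks in the exact proportion forced by $S_N=0$, and the convexity bound certifies that no admissible sequence can overshoot by more. Since nothing in the argument requires $\lim_N G_N$ to exist, the bounds hold unconditionally.
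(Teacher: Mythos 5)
Your proposal is correct and is essentially the paper's own argument: your tangent inequality $\psi(u)\ge u/d$ and your chord bound for the convex inverse $\psi(u)=\l e^u-1\r/d$ are exactly the two parts of \Cref{lem:linear} (the concavity bounds $\ln(1+x)\le x$ and $\ln(1+x)\ge \alpha x+\beta$) read in the inverse direction, and your telescoping-plus-boundedness step via \Cref{lem:proper} is identical to the paper's. Your only deviation is bounding $\liminf_N G_N$ and $\limsup_N G_N$ rather than writing $\lim_N G_N$ directly, which is a small improvement in rigor over the paper's phrasing (since the limit's existence is not established) but not a different route.
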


\begin{wrapfigure}[17]{r}{0.5\linewidth}
 \vspace*{-0.7cm}
\begin{tikzpicture}[>=stealth,scale=0.68]
    \begin{axis}[
        xmin=0,xmax=0.5,
        ymin=0.5,ymax=0.7,
        %axis x line=middle,
        %axis y line=middle,
        axis line style=->,
        xlabel={learning rate $d$},
        ylabel={upper bound (factor of T)},
        ]
        \addplot[blue]
        expression[domain=0:0.1,samples=50]{0.253*x+0.5};
        \addplot[blue]
        expression[domain=0.1:1,samples=100]{(1-ln(1+x)/ln(1-x))^(-1)}
        %expression[domain=0:0.1,samples=50]{x}
        node at (0.2,0.65) {$y(d)=\lt1-\frac{\ln{(1+d)}}{\ln{(1-d)}}\rt^{-1}$}; 
    \end{axis}
\end{tikzpicture}
    \caption{Upper bound (scaling factor of maximum block size if $k=2$) in equation \eqref{eq:main} of \Cref{thm:main}. The upper bound grows almost linearly for the relevant values of $d$.}
    \label{fig:upper_bound}
\end{wrapfigure}
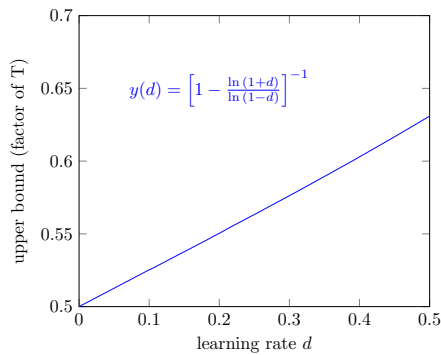

In words, \Cref{thm:main} implies that the EIP-1559 update rule either meets or \emph{slightly} overshoots the target of $T$. The extent of possible overshooting (see upper bound in \eqref{eq:main} of \Cref{thm:main}) depends on the choice of the learning rate $d$. For instance, at the current default level of $d=0.125$, this yields a bound of approximately $1.0627$ or $106\%$ of the block size $T$. The upper bound for values of $d\in(0,0.5]$ (which are of practical interest) is visualized in \Cref{fig:upper_bound}. As we see, this bound grows approximately linearly in $d$. This pattern continues for larger values of $d$ till eventually growing exponentially fast and approaching $2$ (or $200\%$ of $T$) in the limit $d \uparrow 1$ (but such values are, at least currently, only of theoretical interest). The proof of \Cref{thm:main} utilizes the following upper and lower linear bounds on the natural logarithm.

\begin{lemma}\label{lem:linear}
(i) Let $x>-1$. Then it holds that \[\ln{(1+x)}\le x,\]
with equality if and only if $x=0$. (ii) Let $d\in(0,1)$ and let $|x|\le d$. Then, it holds that 
\[\ln{(1+x)}\ge \alpha x+\beta,\]
with $\alpha =\frac{1}{2d}\cdot\lt\ln{(1+d)}-\ln{(1-d)} \rt$ and $\beta=\frac{1}{2}\cdot\lt\ln{(1+d)}+\ln{(1-d)}\rt$.
\end{lemma}

Using \Cref{lem:linear}, we can now prove \Cref{thm:main}.
\begin{proof}[Proof of \Cref{thm:main}]
Let $d\in(0,1)$. By taking the natural logarithm on both sides of equation \eqref{eq:eip1559}, we obtain that
\begin{equation*}
\ln(b_{n+1}/b_n) = \ln{(1 + d\frac{g_n-T}{T})}
\end{equation*}
Recall that $\frac{g_n-T}{T}\in [-1,1]$ by definition. Thus, by applying \Cref{lem:linear} on $\ln{(1 + d\frac{g_n-T}{T})}$ with $d\frac{g_n-T}{T}\in[-d,d]$, we obtain that 
\[\alpha d \cdot \frac{g_n-T}{T}+\beta\le \ln(b_{n+1}/b_n)\le d\cdot \frac{g_n-T}{T},\]
with $\alpha, \beta$ as above.
Combining the above and solving for $g_n$, we obtain that 
\[\frac{T\cdot\ln{(b_{n+1}/b_n)}}{d}+T\le g_n \le \frac{T\cdot\ln{(b_{n+1}/b_n)}}{\alpha d}+\l 1-\frac{\beta}{\alpha d} \r T.\]
Observe that if we sum up all terms from $1$ to $N$, then the term involving $\ln{(b_{n+1}/b_n)}$ on both sides telescopes to
\[\sum_{n=1}^N\ln{(b_{n+1}/b_n)}=\sum_{n=1}^N\l\ln{b_{n+1}}-\ln{b_n}\r=\ln{b_N}-\ln{b_1}.\]
Thus, summing up all terms from $1$ to $N$ in the previous inequality, and using the notation $G_N=\frac{1}{N}\sum_{n=1}^Ng_n$, we obtain that
\begin{equation}\label{eq:aux_up}
G_N\le \frac{T\l\ln{b_N}-\ln{b_1}\r}{N\alpha d}+\l 1-\frac{\beta}{\alpha d}\r T\,,
\end{equation}
and 
\begin{equation}\label{eq:aux_low}
\frac{T\l\ln{b_N}-\ln{b_1}\r}{Nd}+T\le G_N\,,
\end{equation}
for the upper and lower bounds respectively. Concerning the term on the right hand side of the upper bound, observe that after some standard algebraic manipulation, we can write $ \l 1- \frac{\beta}{\alpha d}\r \cdot T=\lt 1- \frac{\ln{(1+d)}}{\ln{(1-d)}}\rt^{-1} \cdot 2T$. To conclude observe that $\lim_{N\to+\infty}b_N<M$ for some $M>0$, since $(b_n)_{n\ge0}$ is bounded by \Cref{lem:proper}.\footnote{The limit is also bounded from above if transactions are no longer included by miners when the base fee becomes so high that the computational cost of processing transactions outweighs any potential miner's tip.} This implies that 
\[
\lim_{N\to+\infty}\frac{T\ln{b_N}-\ln{b_1}}{N}=0.
\]
Thus, taking the limit $N\to+\infty$ on both sides of the inequalities in \eqref{eq:aux_up}, \eqref{eq:aux_low}, we obtain that 
\[
T \le \lim_{N\to+\infty} G_N \le \lt 1- \frac{\ln{(1+d)}}{\ln{(1-d)}}\rt^{-1} \cdot 2T,
\]
as claimed. \qed
\end{proof}

% \Cref{fig:linear_growth} indicates that the long-term average block sizes grow linearly with $d$ consistent with the upper bound derived in equation \eqref{eq:main}.

\subsection{Visualizations: Bifurcation Diagrams and Long-Term Averages}

To gain more intuition on the previous results, we proceed to visualize the individual trajectories of the base fee dynamics and the resulting block sizes for a simulated demand realization. This is done in the \emph{bifurcation diagrams} of \Cref{fig:bifurcation_diagrams}. 
\begin{figure}[t]
    \centering
    \includegraphics[width=0.96\linewidth]{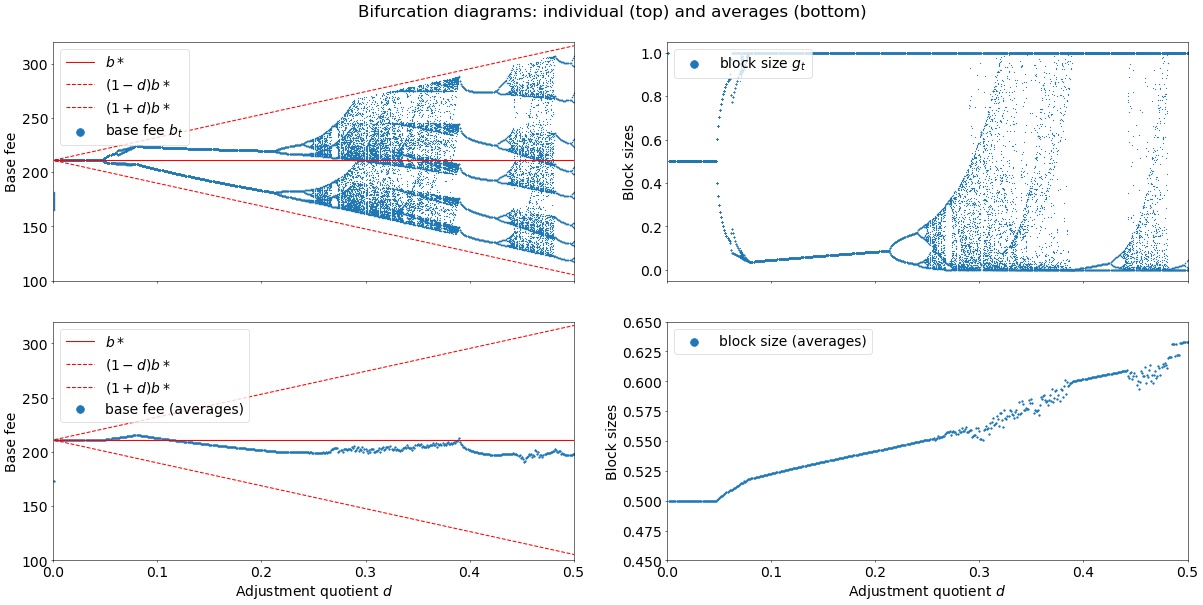}
    \caption{EIP-1559: Individual trajectories ($100$ iterations after skipping $200$ iterations) of the base fee dynamics (top left) and block sizes (top right) for various values of the adjustment quotient, $d$, in $(0,0.5]$ (horizontal axis). The bottom panels show averages of the trajectories in the top panels. Demand (users-valuations) has been simulated from an exponential distribution on $[205,+\infty)$ with mean $\mu=210$ and standard deviation $\sigma=5$. The results are robust to different distributions and initializations (currently $b_0=100$), cf. \Cref{app:experiments}. Despite the chaotic behavior of the individual trajectories (top panels), the long-term averages (bottom panels) exhibit predictable and mathematically tractable patterns (cf. upper bound of block-sizes in \Cref{fig:upper_bound_main}).  Note: Unlike \Cref{fig:empirical_averages}, the scale of the $y-$axis in the block-size panels (top-right and bottom-right) is between $0$ (empty block) and $1$ (full block) and the target is equal to $0.5$.}
    \label{fig:bifurcation_diagrams}
    \vspace{-0.4cm}
\end{figure}

\begin{figure}[t]
    \centering
    \includegraphics[width=0.8\linewidth]{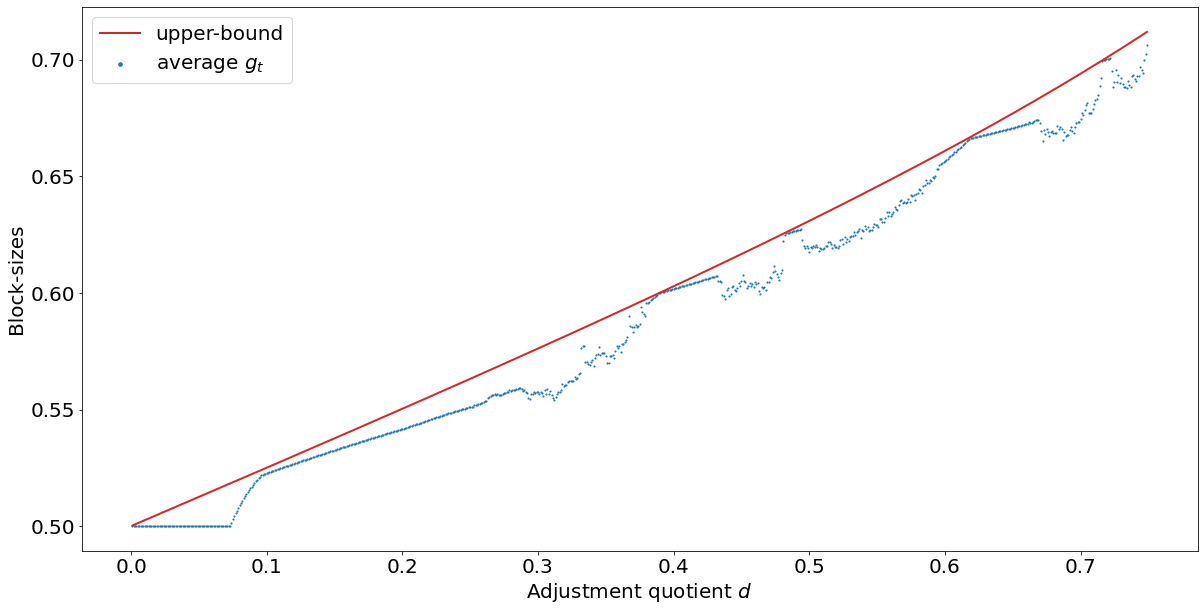}
    \caption{Estimated upper bound (red line) of the average block-sizes as given by equation \eqref{eq:main} in \Cref{thm:main}. The blue dots represent the average block-sizes for normally distributed simulated demand (user-valuations) with mean $\mu=210$ and standard deviation $\sigma=2.5$ at different values of the adjustment quotient, $d$, (horizontal axis). The long-term average block sizes grow linearly with $d$ consistent with the estimated upper bound. Moreover, the upper bound is \emph{tight}. Qualitatively equivalent results obtain for arbitrary demand distributions, e.g., uniform and gamma with arbitrary parameters (not presented here).}
    \label{fig:upper_bound_main}
\end{figure}

\paragraph{Individual trajectories (top panels)}
The bifurcation diagram in the top left panel shows the individual trajectories of the base fee dynamics (blue dots) for different value of the adjustment quotient, $d$, (horizontal axis). Recall that the default value of $d$ is $0.125$.  To generate the plots, we have drawn the user-valuations from a gamma distribution with mean $\mu=220$ and standard deviation $\sigma=10$. The depicted blue dots show 1000 updates after a burn-in of 500 updates (higher numbers of iterations suggest that the dynamics have converged to the depicted attractors). For low values of $d$ (below $0.08$), we see that the base fee dynamics converge to a single value, close to or exactly at the theoretical optimum, $b^*$. For larger values of $d$ (approximately between $0.1$ and $0.3$), the dynamics oscillate between two values and for most remaining values of $d$ (larger than 0.3 and for a small regime roughly between $0.08$ and $0.1$), the dynamics exhibit chaotic behavior (multiple dots dispersed over the whole interval between the two diagonal red lines). In all cases, the dynamics remain within the bounded region $[(1-d)b^*, (1+d)b^*]$.\par
The bifurcation diagram in the top right panel shows the resulting block sizes. Following similar patterns to the base fee dynamics, block sizes converge to the target value ($0.5$ or half-full) for low values of $d$, oscillate between full and (almost) empty for intermediate values of $d$ and become chaotic for larger values of $d$. \par
In summary, the bifurcation plots in the two top panels show the effect of the adjustment parameter, $d$, on the individual trajectories of the base fees and the block sizes. These plots illustrate how small changes in the adjustment quotient, i.e., in the studied \emph{bifurcation parameter}, can cause dramatic changes in the observable trajectories of both base fees and block sizes.
\paragraph{Averages (bottom panels)}
The bottom panels show the averages of the trajectories that are depicted in the top panels; base fees (bottom left) and block sizes (bottom right). We can see that the base fee slightly undershoots the ideal value of $b^*$ and that the block sizes (slightly) overshoot the target of $0.5$. Moreover, the deviation from this target grows linearly in the adjustment quotient $d$. At the current level, i.e., $d=0.125$, the averages are approximately at $0.53$. 
\paragraph{Tight upper-bound on block sizes:}
The figures in the averages in \Cref{fig:bifurcation_diagrams} are not specific to the simulated demand and generalize to arbitrary demand distributions. \Cref{fig:upper_bound_main} shows the estimated upper-bound (red line) of \Cref{eq:main} and the realized average block-sizes for user-valuations drawn from a normal distribution with mean $\mu=210$ and standard deviation $\sigma=10$. The upper-bound is tight and approximates very well the actual evolution of the block-size averages for various values of the adjustment quotient $d$. Qualitatively equivalent results obtain for arbitrarily parameterized uniform, normal and gamma user-valuation distributions (not presented here).

%\begin{figure}[!thb]
%    \centering
%%    \includegraphics[width=0.48\linewidth]{graphs/new_bound.png}
%    \includegraphics[width=0.68\linewidth]{graphs/bounds.png}
%    \caption{Estimated upper bound (red line) of the average block-sizes as given by equation \eqref{eq:main} in \Cref{thm:main}. The blue dots represent the average block-sizes for normally distributed simulated demand (user-valuations) with mean $\mu=210$ and standard deviation $\sigma=2.5$ at different values of the adjustment quotient, $d$, (horizontal axis). As can be seen, the long-term average block sizes grow linearly with $d$ consistent with the estimated upper bound. Moreover, the upper bound is \emph{tight}. Qualitatively equivalent results obtain for arbitrary demand distributions, e.g., uniform and gamma with arbitrary parameterizations (not presented here).}
%    \label{fig:upper_bound_main}
%\end{figure}
%
\subsection{Exponential EIP-1559}\label{sub:exponential}
As we show in \Cref{thm:exponential}, \eqref{eq:exponential} achieves time average convergence exactly to the target block sizes of $T$. However, \eqref{eq:eip1559} is more relevant from practical purposes since it requires integer rather than floating point calculations.

\begin{theorem}\label{thm:exponential}
For the dynamical system in equation \eqref{eq:exponential}, it holds that 
\[\lim_{N\to +\infty} G_N=T,\]
i.e., the time average of the block sizes (or block occupancies), $(G_N)_{N\ge1}$, converges to the target value $T$ as the number, N, of updates grows to infinity.
\end{theorem}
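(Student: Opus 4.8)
The plan is to mirror the proof of \Cref{thm:main}, but to exploit the fact that the exponential update rule turns the logarithmic transform into an \emph{exact} linear identity rather than a pair of one-sided bounds. Taking the natural logarithm of both sides of \eqref{eq:exponential} gives
\[
\ln\!\left(b_{n+1}/b_n\right) = \frac{g_n - T}{T}\,\ln(1+d),
\]
since the exponent in \eqref{eq:exponential} multiplies $\ln(1+d)$ directly. Solving for $g_n$ yields the clean expression
\[
g_n = T + \frac{T}{\ln(1+d)}\,\ln\!\left(b_{n+1}/b_n\right).
\]
This is the analogue of the chain of inequalities in the proof of \Cref{thm:main}, except that here there is no gap between the upper and lower estimates. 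This is precisely why the exponential rule should hit the target exactly rather than overshoot it, and it is the reason \Cref{lem:linear} is not needed in the present argument.

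Next I would sum this identity over $n = 1,\dots,N$ and divide by $N$. The crucial feature is again the telescoping of the logarithmic term,
\[
\sum_{n=1}^N \ln\!\left(b_{n+1}/b_n\right) = \ln b_{N+1} - \ln b_1,
\]
so that, writing $G_N = \frac{1}{N}\sum_{n=1}^N g_n$ as in \eqref{eq:average}, one obtains the exact formula
\[
G_N = T + \frac{T\,\left(\ln b_{N+1} - \ln b_1\right)}{N\,\ln(1+d)}.
\]

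Finally, to close the argument I would invoke \Cref{lem:proper}: since \eqref{eq:exponential} is a \prop{} (as already noted in \Cref{sub:exponential}) and $0 < b_0 < \infty$, the base fee sequence is bounded away from both $0$ and $\infty$, with $\min\{b_0,(1+d)^{-1}b^*\} \le b_n \le \max\{b_0,(1+d)\,b^*\}$ (here $\alpha = 1+d$ for the exponential rule, since the multiplicative factor $(1+d)^{(g_n-T)/T}$ ranges over $[(1+d)^{-1},(1+d)]$). Consequently $\ln b_{N+1}$ stays uniformly bounded, the remainder term $\tfrac{T(\ln b_{N+1}-\ln b_1)}{N\ln(1+d)}$ vanishes as $N\to\infty$, and I conclude that $\lim_{N\to\infty} G_N = T$, as claimed.

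I expect no genuine obstacle here: unlike in \Cref{thm:main}, no two-sided approximation of the logarithm is required, so the only technical point to verify is the boundedness of $(b_n)_{n\ge0}$, which \Cref{lem:proper} already supplies, together with strict positivity of the lower bound $\min\{b_0,(1+d)^{-1}b^*\}>0$ so that $\ln b_{N+1}$ cannot diverge to $-\infty$. The latter is immediate because the multiplicative updates preserve positivity starting from $b_0 > 0$.
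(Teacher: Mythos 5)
Your proof is correct and follows essentially the same route as the paper's: take logarithms of \eqref{eq:exponential} to obtain an exact linear identity for $g_n$, telescope the sum, and kill the remainder term using the boundedness of $(b_n)_{n\ge 0}$ supplied by \Cref{lem:proper}. If anything, your write-up is cleaner than the paper's, which carries a spurious factor of $\tfrac{1}{2}$ in its intermediate displays (writing $\tfrac{T}{2}$ where $T$ belongs) and telescopes to $\ln b_N$ rather than $\ln b_{N+1}$; your explicit observation that the lower bound from \Cref{lem:proper} is strictly positive, so that $\ln b_{N+1}$ cannot diverge to $-\infty$, addresses a point the paper glosses over.
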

The proof of \Cref{thm:exponential} mirrors the steps in the proof of \Cref{thm:main} and is therefore deferred to \Cref{app:appendix}. However, it is worth noting that there is nothing special about $T$; the time-average of the dynamic in equation \eqref{eq:exponential} would converge to any given target block-size (as it appears in the numerator of the exponent) within, of course, the admissible limits. 

\subsubsection{Convergence Rates}
From the proof of \Cref{thm:exponential}, we can also reason about the convergence rate. After $N$ time-steps, we have that the distance $d(G_N,T)$ between $G_N:=\frac{1}{N}\sum_{t=0}^Ng_n$ and the target $T$ is equal to 
\[d(G_N,T)=\frac{T\l\ln{b_N}-\ln{b_1}\r}{2\ln{(1+d)}}\cdot\frac{1}{N}\]
which drops linearly in $N$, i.e., $\mathcal{O}(1/N)$. The constant factor depends on $T, \ln{1+d}$ and the error due to initialization, $\ln{b_N}-\ln{b_1}$. Since $b_N$ is bounded (cf. \Cref{lem:proper}, we also know that $\ln{b_N}$ cannot grow (in absolute value) beyond certain bounds. Thus, all these terms vanish at a rate of $1/N$. Note, that in a similar fashion, we can get similar rates for the linear-EIP1559 update, cf. \Cref{thm:main}. In this case, we have that \begin{align*}
  \frac{T\l\ln{b_N}-\ln{b_1}\r}{2d}\cdot\frac{1}{N} &\le d(G_N,T) \le \frac{T\l\ln{b_N}-\ln{b_1}\r}{2\alpha d}\cdot \frac{1}{N}-\frac{\beta}{\alpha d}\frac{T}{2}\,.
\end{align*}
Again, the convergence rate is linear (by the same reasoning as above for $\ln{b_N}$ and the only thing that changes is the constant error term in the upper bound.

\subsubsection{Visualizations for Exponential EIP-1559}

\Cref{fig:bifurcation_diagrams_exp} shows the same data as \Cref{fig:bifurcation_diagrams} but for the exponential EIP-1559 update rule (cf. \eqref{eq:exponential}). The updates of both base fees (top left panel) and block sizes (top right panel) are more smooth (albeit not entirely non-chaotic). However, as can be seen from the left panels, the base fee dynamics tend to overshoot the actual ideal value. This behavior is slightly dependent on the initialization of the dynamics and is not consistent among all possible simulations in many of which the base fee dynamics meet $b^*$ (not presented here). However, in all cases, the important observation concerns the block sizes and the fact that the target is met exactly (horizontal line at exactly $0.5$ in the bottom right panel) regardless of the initialization and the value of the adjustment quotient $d$. This outcome is consistent with the theoretical prediction of \Cref{thm:exponential}.

\begin{figure*}[!t]
    \centering
    \includegraphics[width=0.96\linewidth]{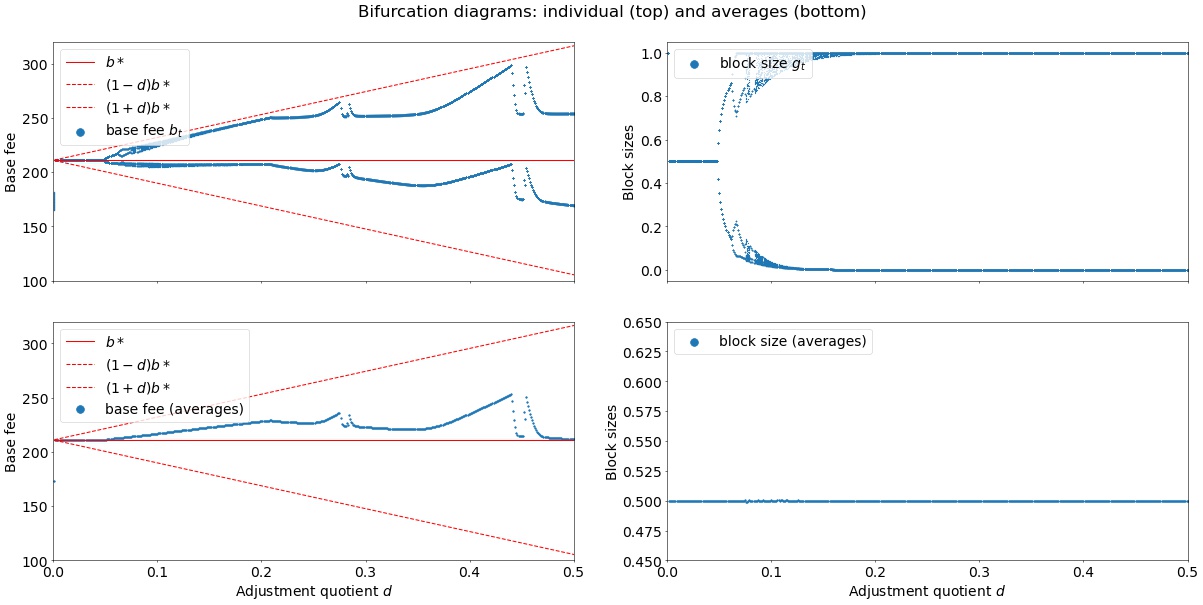}
    \caption{
    % \stef{try different initial condition} 
    Exponential EIP-1559: Individual trajectories of the base fee dynamics (top left) and block sizes (top right) for various values of the adjustment quotient, $d$ (horizontal axis). The panels are the same as in \Cref{fig:bifurcation_diagrams}. Exponential updates result in more regular (individual) trajectories and exactly achieve the block-size target on average. While the exact base fee trajectory is sensitive to $b_0$, it is not clear why the average base fee does not approach the market-clearing price, $b^*$ (horizontal red line).}
    \label{fig:bifurcation_diagrams_exp}
\end{figure*}

\section{Discussion}\label{sec:discussion}

\subsubsection{Overshoot}
The inequality $\ln{\left (1+d\cdot\frac{g_n-T}{T}\right )}\le d\cdot \frac{g_n-T}{T}$ that is used to derive the lower bound of $T$ in \Cref{thm:main} (cf. inequality (i) in \Cref{lem:linear}) holds with equality \textit{if and only if} $d\cdot \frac{g_n-T}{T}=0$, i.e., % Consequently, $\ln(\frac{b_{n+1}}{b_n})$ equals $d \frac{g_{n}- T}{T}$ only if $d\frac{g_n-T}{T} = 0$.
if and only if $g_n=T$. This implies that the long-term average block sizes will be equal to the target, $T$, if and only if the system equilibrates at a fixed point. However, the base fee dynamics are provably chaotic for almost all market conditions (demand and user valuations) that can be met in practice \cite{Leo21}, and, thus, individual blocks will generally deviate from the target. Consequently, the lower bound will hold with \emph{strict} inequality, implying a positive overshoot in EIP-1559 (almost) regardless of market conditions.

\subsubsection{Generality \& Robustness} We note that the proof of \Cref{thm:main} does not in any way rely on $F_n$ or the distribution of $N_n$, but only on the learning rate $d$. As such, our results hold regardless of the exact distribution of valuations or the block creation protocol. In fact, the only technical requirement is that for \Cref{lem:proper}, we need the existence of a market-clearing price $b^*$ to express our bounds. However, if we make the market-clearing price $b^*$ time-dependent i.e., $b^*_n$ instead of $b^*$, we only need to require that there exist $b^*_{\min} > 0$ and $b^*_{\max} < \infty$ such that $b^*_{\min} \leq b^*_{n} \leq b^*_{\max}$ for all $n\geq 0$ to obtain similar bounds. These are loose restrictions in practice. Our results also do not rely on $k$, e.g., if the maximum block size were set to $4$ the target, then this would have no impact on the theoretical bounds. Interestingly, the upper bound in \eqref{eq:main} is less than twice the target \emph{regardless} of how small the target is relative to the maximum. 

Although our theoretical results establish general bounds on the long-term average block size, the precise average values may depend on many factors, including the distribution of user valuations and the block creation protocol. We do observe from \Cref{fig:empirical_averages} that the observed block size averages over 5000-block batches exhibit remarkably consistent behavior both before and after the PoS merge. Since EIP-1559, market conditions have changed considerably: the base fee itself has changed from around 100 GWei (1 GWei = $10^{-9}$ ETH) in Jan.\ 2022 to around 10 Gwei in Aug.\ 2022. However, the average block size per batch has remained around $102.9\%$ of the target throughout this period. Interestingly, the average block size dropped to around $102.0\%$ immediately after the switch to PoS. The reason behind this drop is a stimulating direction for future research. One hypothesis is that before PoS, the inter-block times had an (approximately) exponential distribution, whereas they are constant in Ethereum's PoS protocol \cite{buterin2020combining}. As such, the variance of $N_n$ is smaller and block sizes are more regular. Another hypothesis is that blocks are less congested after the switch due to a decrease in inter-block times from roughly 13 seconds on average \footnote{\url{https://ycharts.com/indicators/ethereum_average_block_time}} to 12 seconds.

\section{Conclusions}\label{sec:conclusions}

In this paper, we have formally analyzed the long-term performance of the standard EIP-1559 transaction fee market mechanism and its closely related variants; exponential EIP-1559 among others. Our findings provide a theoretical justification for the anecdotal evidence that blocks are slightly more full than normal. As our main contribution, we have found that both designs, the baseline EIP-1559 and its exponential variant, are approximately and exactly optimal, respectively, even under the prevailing chaotic conditions in inter-block sizes. Importantly, this implies that these mechanisms can still achieve their goals even if the underlying system does not equilibrate, a condition that is rarely met in practice. The empirical data since the launch of EIP-1559 suggest that our results accurately capture reality: observable average block sizes are within the sharp approximation bounds predicted here and this is, in fact, robust to the underlying protocol functionality including both pre- and post-PoS merge periods. 
\par
Concerning future work, the current paper provides a framework to evaluate the performance and analyze the stability of transaction fee or other related cryptoeconomic mechanisms. Practical use cases suggest that blockchain economies are systems with complex dynamics: when these economies are close to their equilibrium state, they can re-adjust their parameters and self-stabilize. However, once they are pushed further away and/or lose their peg to their fundamentals, they start to spiral away and eventually collapse (e.g., the Terra/Luna crypto network). Determining the limits in which these instabilities emerge already before such mechanisms are launched in practice, is critical to improve their efficiency and avoid future financial catastrophes.

\section*{Acknowledgements}

This research/project is supported in part by the National Research Foundation, Singapore under its AI Singapore Program (AISG Award No: AISG2-RP-2020-016), NRF 2018 Fellowship NRF-NRFF2018-07, NRF2019-NRF-ANR095 ALIAS grant, grant PIE-SGP-AI-2018-01, AME Programmatic Fund (Grant No. A20H6b0151) from the Agency for Science, Technology and Research (A*STAR) and the Ethereum Foundation. It is also supported by the National Research Foundation (NRF), Prime Minister's Office, Singapore, under its National Cybersecurity R\&D Programme and administered by the National Satellite of Excellence in Design Science and Technology for Secure Critical Infrastructure, Award No. NSoE DeST-SCI2019-0009.

\bibliographystyle{plain}
\bibliography{ref}

\begin{thebibliography}{10}

\bibitem{bigquery}
Google {BigQuery}.
\newblock \url{https://cloud.google.com/bigquery}.

\bibitem{basu2019stablefees}
Soumya Basu, David Easley, Maureen O'Hara, and Emin~G\"un Sirer.
\newblock {StableFees}: A predictable fee market for cryptocurrencies.
\newblock {\em Available at SSRN: \url{https://ssrn.com/abstract=3318327} or
  \url{http://dx.doi.org/10.2139/ssrn.3318327}}, 2019.

\bibitem{Con19}
V.~Buterin, E.~Conner, R.~Dudley, M.~Slipper, I.~Norden, and A.~Bakhta.
\newblock {EIP1559: Fee market change for ETH 1.0 chain}.
\newblock [\href{https://eips.ethereum.org/EIPS/eip-1559}{Online}], 2019.

\bibitem{buterin2014next}
Vitalik Buterin.
\newblock Ethereum: A next-generation smart contract and decentralized
  application platform.
\newblock
  [\href{https://blockchainlab.com/pdf/Ethereum_white_paper-a_next_generation_smart_contract_and_decentralized_application_platform-vitalik-buterin.pdf}{Online}],
  2014.

\bibitem{vitalik2021amm}
Vitalik Buterin.
\newblock Make {EIP 1559} more like an {AMM} curve.
\newblock Ethereum Research
  [\href{https://ethresear.ch/t/make-eip-1559-more-like-an-amm-curve/9082}{Online}],
  2021.

\bibitem{buterin2020combining}
Vitalik Buterin, Diego Hernandez, Thor Kamphefner, Khiem Pham, Zhi Qiao, Danny
  Ryan, Juhyeok Sin, Ying Wang, and Yan~X Zhang.
\newblock Combining {GHOST} and {Casper}.
\newblock {\em arXiv preprint arXiv:2003.03052}, 2020.

\bibitem{chung2021foundations}
Hao Chung and Elaine Shi.
\newblock Foundations of transaction fee mechanism design.
\newblock {\em arXiv preprint arXiv:2111.03151}, 2021.

\bibitem{diamandis2022dynamic}
Theo Diamandis, Alex Evans, Tarun Chitra, and Guillermo Angeris.
\newblock Dynamic pricing for non-fungible resources.
\newblock {\em arXiv preprint arXiv:2208.07919}, 2022.

\bibitem{Fei22}
Dankrad Feist.
\newblock Exponential eip-1559.
\newblock
  [\href{https://dankradfeist.de/ethereum/2022/03/16/exponential-eip1559.html}{Online}],
  2022.

\bibitem{ferreira2021dynamic}
Matheus~VX Ferreira, Daniel~J Moroz, David~C Parkes, and Mitchell Stern.
\newblock Dynamic posted-price mechanisms for the blockchain transaction-fee
  market.
\newblock In {\em Proceedings of the 3rd ACM conference on Advances in
  Financial Technologies}, pages 86--99, 2021.

\bibitem{Gaf22}
Yotam Gafni and Aviv Yaish.
\newblock {Greedy Transaction Fee Mechanisms for (Non-)myopic Miners}, 2022.

\bibitem{gencer2018decentralization}
Adem~Efe Gencer, Soumya Basu, Ittay Eyal, Robbert~van Renesse, and Emin~G{\"u}n
  Sirer.
\newblock Decentralization in {Bitcoin} and {Ethereum} networks.
\newblock In {\em International Conference on Financial Cryptography and Data
  Security}, pages 439--457. Springer, 2018.

\bibitem{karantias2020proof}
Kostis Karantias, Aggelos Kiayias, and Dionysis Zindros.
\newblock Proof-of-burn.
\newblock In {\em International Conference on Financial Cryptography and Data
  Security}, pages 523--540. Springer, 2020.

\bibitem{themerge}
Sam Kessler.
\newblock The {Ethereum} {Merge} is done, opening a new era for the
  second-biggest blockchain.
\newblock
  [\href{https://www.coindesk.com/tech/2022/09/15/the-ethereum-merge-is-done-did-it-work/}{Online}],
  2022.

\bibitem{lavi2022redesigning}
Ron Lavi, Or~Sattath, and Aviv Zohar.
\newblock Redesigning {Bitcoin}’s fee market.
\newblock {\em ACM Transactions on Economics and Computation}, 10(1):1--31,
  2022.

\bibitem{Leo21}
Stefanos Leonardos, Barnab\'{e} Monnot, Dani\"{e}l Reijsbergen, Efstratios
  Skoulakis, and Georgios Piliouras.
\newblock {Dynamical Analysis of the EIP-1559 Ethereum Fee Market}.
\newblock In {\em Proceedings of the 3rd ACM Conference on Advances in
  Financial Technologies}, AFT '21, page 114–126, New York, NY, USA, 2021.
  Association for Computing Machinery.

\bibitem{liu2022empirical}
Yulin Liu, Yuxuan Lu, Kartik Nayak, Fan Zhang, Luyao Zhang, and Yinhong Zhao.
\newblock Empirical analysis of {EIP-1559}: Transaction fees, waiting time, and
  consensus security.
\newblock {\em arXiv preprint arXiv:2201.05574}, 2022.

\bibitem{Mil22}
Jason {Milionis}, Dean {Hirsch}, Andy {Arditi}, and Pranav {Garimidi}.
\newblock {A Framework for Single-Item NFT Auction Mechanism Design}.
\newblock {\em arXiv e-prints}, page arXiv:2209.11293, September 2022.

\bibitem{monnot2020ethereums}
Barnab{\'e} Monnot, Qingze Hum, Chuan Sen~Marcus Koh, Stefanos Leonardos, and
  Georgios Piliouras.
\newblock Ethereum’s transaction fee market reform of {EIP 1559}.
\newblock {\em Proceedings of the WINE 2020 Workshop on Blockchain}, 2020.

\bibitem{nakamoto2008bitcoin}
Satoshi Nakamoto.
\newblock Bitcoin: A peer-to-peer electronic cash system.
\newblock [\href{https://bitcoin.org/bitcoin.pdf}{Online}], 2008.

\bibitem{Nis07}
Noam Nisan, Tim Roughgarden, Eva Tardos, and Vijay~V. Vazirani.
\newblock {\em Algorithmic Game Theory}.
\newblock Cambridge University Press, USA, 2007.

\bibitem{Rei21}
Daniël Reijsbergen, Shyam Sridhar, Barnabé Monnot, Stefanos Leonardos,
  Stratis Skoulakis, and Georgios Piliouras.
\newblock {Transaction Fees on a Honeymoon: Ethereum's EIP-1559 One Month
  Later}.
\newblock In {\em 2021 IEEE International Conference on Blockchain
  (Blockchain)}, pages 196--204, 2021.

\bibitem{Rou20}
Tim Roughgarden.
\newblock Transaction fee mechanism design for the ethereum blockchain: An
  economic analysis of eip-1559, 2020.

\bibitem{roughgarden2021transaction}
Tim Roughgarden.
\newblock Transaction fee mechanism design.
\newblock {\em ACM SIGecom Exchanges}, 19(1):52--55, 2021.

\bibitem{shi2022can}
Elaine Shi, Hao Chung, and Ke~Wu.
\newblock What can cryptography do for decentralized mechanism design?
\newblock {\em arXiv preprint arXiv:2209.14462}, 2022.

\end{thebibliography}

\appendix
\section{Appendix}\label{app:appendix}

\subsection{Omitted Proofs}\label{sub:omitted}

\begin{proof}[Proof of \Cref{lem:proper}]
We prove the assertion using induction.\\
\textbf{Base case:} clearly, $b_0 \geq \min\{b_0,\alpha^{-1} b^*\}$ and $b_0 \leq \max\{b_0, \alpha b^*\}$ by the definition of $\min$ and $\max$, so the lemma holds for the base case.\\
\textbf{Induction step:} we consider two cases: 
\begin{enumerate}
\item $b_n \leq b^*$, then $b_{n+1} = h(b_n) \geq b_n \geq \min\{b_0, \alpha^{-1}b^*\}$ where the first inequality follows from $h$ being a proper update rule (non-divergence) and the second from the induction hypothesis. Furthermore, \[b_{n+1} = h(b_n) \leq a b_n \leq \alpha b^* \leq \max\{b_0, \alpha b^*\},\] where the first inequality follows from $h$ being a proper update rule (bounded relative differences) and the second from the case assumption.
\item $b_n \geq b^*$: similar as above.\qed
\end{enumerate}
\end{proof}

\begin{proof}[Proof of \Cref{lem:linear}]
Part (i) is well known and follows directly from the Taylor expansion of $\ln{(1+x)}$, i.e., 
\[\ln{(1+x)}=\sum_{n=1}^{\infty}(-1)^{n+1}\frac{x^n}{n}\le (-1)^{1+1}\cdot \frac{x^1}{1}=x.\]
To obtain part (ii), i.e., to lower bound $\ln{(1+x)}$ by a linear function $y(x)=ax+b$ for $|x|\le d$, we only need to require that $y(-d)=\ln{(1-d)}$ and $y(d)=\ln{(1+d)}$. This follows directly from the fact that $\ln{(1+x)}$ is concave, and thus, its graph lies above the line that connects any two of its points for all intermediate values between these points. Solving the system of these two equations, yields
\[y(x)=\frac{\ln{(1+d)}-\ln{(1-d)}}{2d}x+\frac{\ln{(1+d)}+\ln{(1-d)}}{2}\,\;,\]
as claimed. \qed
\end{proof}

\begin{proof}[Proof of \Cref{thm:exponential}]
We start from equation \eqref{eq:exponential} and take the natural logarithm of both sides
\[\ln{\l b_{n+1}/b_n\r}= \l\frac{g_n-T}{T}\r\cdot\ln{(1+d)},\]
which after some rearranging of the terms yields 
\[g_n=
\frac{T}{2}\cdot\frac{\ln{(b_{n+1}/b_n)}}{\ln{(1+d)}}+\frac{T}{2}.\]
Summing up all terms from $1$ to $N$, we obtain that 
\begin{equation}\label{eq:auxiliary_3}
\frac{1}{N}\sum_{n=1}^Ng_n=\frac{T}{2N\ln{(1+d)}}\cdot\sum_{n=1}^N\ln{(b_{n+1}/b_n)}+\frac{1}{N}\cdot\frac{N\cdot T}{2}\,.
\end{equation}
The sum on the left hand side telescopes to 
\[\sum_{n=1}^N\ln{(b_{n+1}/b_n)}=\sum_{n=1}^N\l\ln{b_{n+1}}-\ln{b_n}\r=\ln{b_N}-\ln{b_1}.\]
Thus, \eqref{eq:auxiliary_3} becomes
\[
\frac{1}{N}\sum_{n=1}^Ng_n=
\frac{T\l\ln{b_N}-\ln{b_1}\r}{2N\ln{(1+d)}}+\frac{T}{2}\,.
\]
To conclude observe that $\lim_{N\to+\infty}b_N<M$ for some $M>0$, since $(b_n)_{n\ge0}$ is bounded. This implies that 
\[
\lim_{N\to+\infty}\frac{T\l\ln{b_N}-\ln{b_1}\r}{2N\ln{(1+d)}}=0.
\]
Thus, taking the limit $N\to+\infty$ on both sides of the previous equation, we obtain that 
\[
\lim_{N\to+\infty}\frac{1}{N}\sum_{n=1}^Ng_n=T,
\]
as claimed. \qed
\end{proof}
From the proof of \Cref{thm:exponential}, we can also reason about the convergence rate. After $N$ time-steps, we have that the distance $d(G_N,T)$ between $G_N:=\frac{1}{N}\sum{t=0}^Ng_n$ and the target $T$ is equal to 
\[d(G_N,T)=\frac{T\l\ln{b_N}-\ln{b_1}\r}{2\ln{(1+d)}}\cdot\frac{1}{N}\]
which drops linearly in $N$, i.e., $\mathcal{O}(1/N)$. The constant factor depends on $T, \ln{1+d}$ and the error due to initialization, $\ln{b_N}-\ln{b_1}$. Also these terms vanish at a rate of $1/N$.

\section{General BFURs}\label{app:proper}
In this section, we discuss some general BFURs which have been proposed by the research community or which have appeared in the relevant literature. To allow dependencies on more parameters (other than the current base fee), we first provide some additional definitions and notation. \par

Let $\Theta\subseteq \mathbb R^d$ for some integer $d>0$ denote a \emph{parameter} space. Generalized base fee update rules (GBFURs), are functions $h : (0,\infty)\times \Theta \rightarrow (0,\infty)$. In other words, GBFURs condition their updates on both the current base fee, $b_n$, and on additional time-dependent or time-independent information, e.g., the current block-size, $g_n$, or the initial base fee, $b_0$ etc. The notion of \emph{proper} GBFURs can be adapted accordingly (cf. \Cref{def:proper}).

\begin{definition}[Generalized Proper Base Fee Update Rules (G\props)]\label{def:gproper}
Let $T$ be the target block-size and let $\Theta$ be a parameter space. An generalized update rule $h : (0,\infty)\times \Theta \rightarrow (0,\infty)$ is called \textit{proper} if it satisfies the following
\begin{itemize}[leftmargin=*,align=left]
\item[(GA.1)] \emph{non-divergence}: $h(b_n,\theta_n) \geq b_{n}$ if $g(b_n) \geq T$  and $h(b_n,\theta_n) \leq b_{n}$ if $g(b_n) \leq T$.
\item[(GA.2)] \emph{bounded relative differences}: there exists an $\alpha\geq1$ such that $\alpha^{-1} b_n \leq h(b_n,\theta_n) \leq \alpha b_n$.
\end{itemize}
\end{definition}

Using the above, we can now proceed to the study of general BFURs.

\subsubsection{AMM-based mechanisms:}\label{sub:amm}
Another alternative to \eqref{eq:eip1559} that has been proposed is the AMM-based mechanism which is defined as follows \cite{vitalik2021amm}. Let $x_n$ denote the stochastic process of the \textit{excess gas} issued. The evolution of $x_n$ is given by
\begin{equation}\label{eq:amm_excess_gas}
x_{n+1} = \max \{0, x_n + g(b_n) - T \}
\end{equation}
with $x_0=0$. Then, the AMM base fee update rule is given by
\begin{equation}\label{eq:amm}
b_{n+1} = q e^{q x_{n+1}}\tag{AMM}
\end{equation}
where $q$ is the \textit{adjustment quotient} which determines how the base fee is chosen based on the current excess gas ($q$ is similar to parameter $d$ in the standard EIP-1559 update). From equations \eqref{eq:amm_excess_gas} and \eqref{eq:amm}, we have that 
\begin{equation}\label{eq:amm_b}
b_{n+1}=b_n\cdot e^{q\max{\{-x_n,g(b_n)-T\}}}.\tag{AMM-closed}
\end{equation}

The \eqref{eq:amm_b} base fee update rule depends on both the current base fee, $b_n$, and the current excess gas, $x_n$. Thus, it is a generalized update rule. Since $x_n\ge0$, we have that $\max{\{-x_n, g(b_n)-T\}}=g(b_n)-T$ whenever $g(b_n)\ge T$. This implies that $h(b_n,x_n)\ge b_n$ if this is the case. Similarly, if $g(b_n)<T$, then $\max{\{-x_n, g(b_n)-T\}}<0$ which implies that $b_{n+1}=h(b_n,x_n)<b_n$. Thus, \eqref{eq:amm_b} satisfies (GA.1). Since $e^y$ is continuous for any $y\in \mathbb R$, it is also straightforward to show that \eqref{eq:amm_b} satisfies (GA.2). Thus, \eqref{eq:amm} is a G\props.

\paragraph{Upper bound on average block sizes:}
We next derive an upper bound for the average block sizes with the \eqref{eq:amm} mechanism. Equation \eqref{eq:amm_excess_gas} implies that $x_{n+1} \geq x_n + g(b_n) - T $ or equivalently that $g(b_n) \leq  x_{n+1} - x_n + T$. Thus, 
\begin{align*}
\frac{1}{N} \sum_{n=1}^N g(b_n) &\leq \frac{1}{N} \sum_{n=1}^N \l x_{n+1} - x_{n} \r + T =\frac{1}{N} \l x_{N+1} - x_{1} \r + T
\end{align*}
If $\lim_{N\to\infty}x_N<X$ for some $X$, i.e., if $x_N$ stays bounded, then $\lim_{N \rightarrow \infty}\frac{1}{N} \l x_{N+1} - x_{1} \r = 0 $ and therefore
\[
\lim_{N \rightarrow \infty} \frac{1}{N} \sum_{n=1}^N g(b_n) \leq T.
\]
Alternatively, from equation \eqref{eq:amm_b}, we have that \[\frac{1}{N}\sum_{n=1}^N\max{\left\{T-x_n,g(b_n)\right\}}=T+\frac{1}{qN}(\ln{b_{N}}-\ln{b_{1}}).\]
Again, if $b_N$ is bounded (which is always the case from below by $L=7$ and also from above, under the assumption that users' valuations are bounded), then 
\[\lim_{N\to\infty}\frac{1}{N}\sum_{n=1}^N\max{\left\{T-x_n,g_n\right\}}=T.\]

\paragraph{Sufficient condition for convergence of AMM to $T$.}
If $x_n\ge T$ for some $n\ge0$, then,   \[\max{\left\{\frac{T}{2}-x_n,g_n\right\}}=g_n.\]
To deal with the case $0\le x_n \le T$, one sufficient condition is to assume that for each such $x_n$, it holds $g_n(b_n)>T-x_n$, where    $b_n=\frac{d}{T}e^{\frac{d}{T}x_n}$. To obtain a simple sufficient condition, we may require that \[\min_{x_n\in [0,T]} {g_n(b_n)}>\max_{x_n\in [0,T]}{\{T-x_n\}}=T.\] Using that $g_n$ is decreasing in $b_n$, and that $b_n$ is strictly increasing in $x_n$, this reduces to requiring that $g_n(b_n(T))>T$, where $b_n(T)=\frac{d}{T}e^d$. \par
In other words, if there are sufficient users (i.e., more than $T$) with valuations above $\frac{d}{T}e^d$, then the $\max$ in the previous summation is always resolved by $g_n$ and we obtain convergence to $T$. In symbols, we can formulate the sufficient condition as 
\[\lambda \overline{F}\left(\frac{d}{T}e^d\right)>1\]
or equivalently (if we solve for $d$),
\[de^d<T\overline{F}^{-1}(\lambda/2)\]
where the inequality was reversed, since the inverse of a decreasing function (here $\overline{F}$) is also decreasing.

\begin{wrapfigure}[17]{R}{0.56\linewidth}
\vspace{-0.92cm}
    \centering
    \includegraphics[width=0.9\linewidth, clip=true, trim = 0cm 0cm 0cm 2cm]{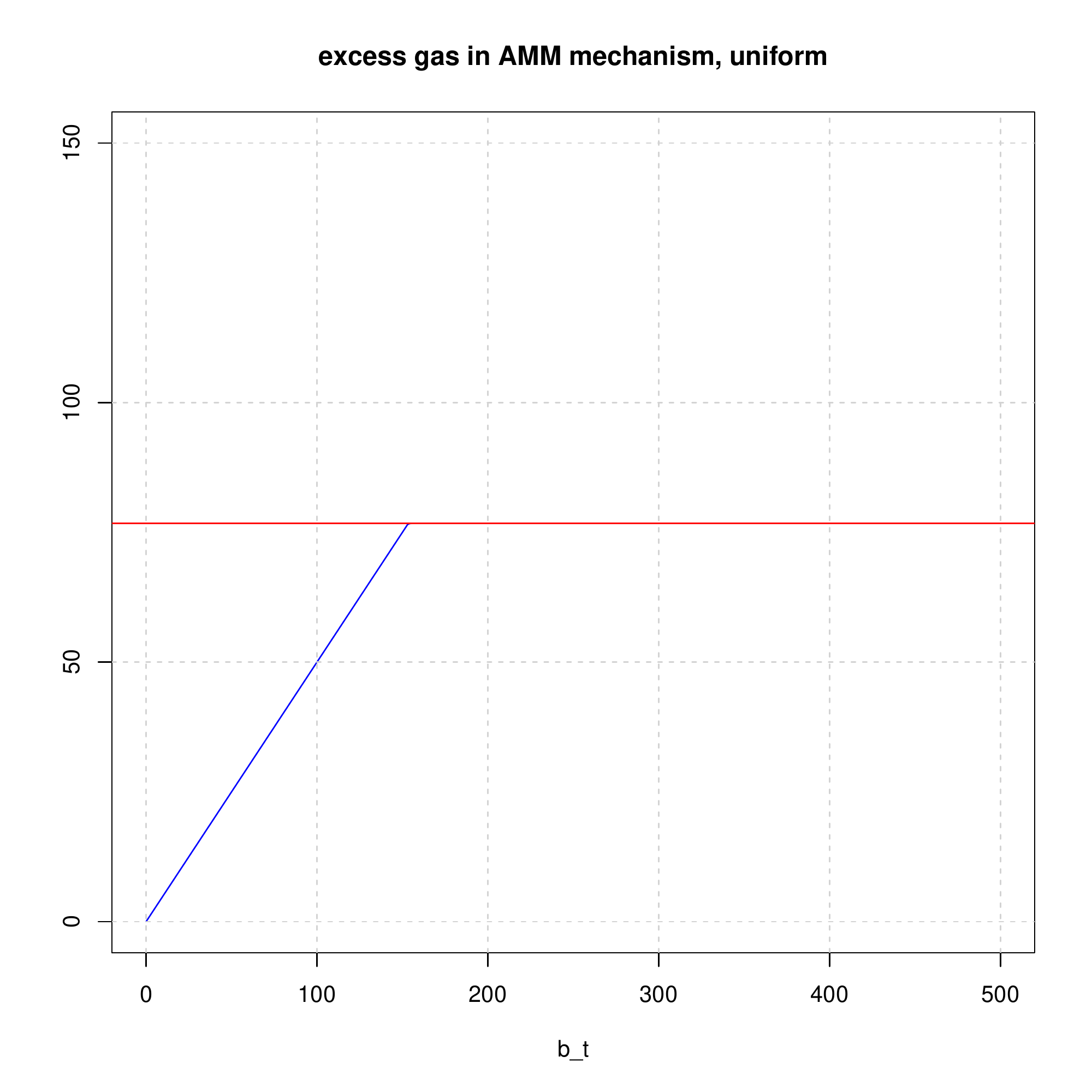}
    \caption{Excess gas over time in the AMM mechanism for uniform demand distribution.}
    \label{fig:amm_excess_gas}
\end{wrapfigure}

% \paragraph{Sufficient condition for excess gas being strictly positive.}
% To exclude that the excess gas goes frequently to or close to zero, we again need to assume that there is enough demand to fill the blocks above the target when the base fee is sufficiently low. \par
% For instance, assume that the previous condition holds, i.e., that when the base fee $b_n$ is equal to or lower than $d/(T) e^{d}$, then there are enough users with valuations to fill up the block more than $T$. This means that $g_n>T$ and hence, that $x_{n+1}>x_n$. Consequently, whenever $x_n\le T$, then $x_{n+1}>x_n$, which implies that the limit of $x_n$ (if it exists) cannot be less than or equal to $T$. In other words, if this condition holds, then we know that
% \[\limsup_{t\to+\infty} x_n>T.\]
% Assume that $x_n=0$ for some $n\ge0$. Then, $b_n=\frac{d}{T}e^{\frac{d}{T}0}=\frac{d}{T}$ (which is a number less than $1$, in particular less than 7 which is the minimum allowed by the mechanism). Thus, if we naturally assume that $g(2d/T)=T$, i.e., that at this low base fee, there will be enough users to fill the block, then $x_{n+1}=T$. Thus, any
% $x_n=0$ is followed by an $x_{n+1}=T$ with  $b_{n+1}=\frac{d}{T}e^d$.
% ---
Finally, if $b^* = \bar{F}^{-1}\l\frac{T}{2\lambda}\r$, then by \eqref{eq:gas_use} we have that $g(b^*) =T$. Moreover, if for some $n > 0$, it holds that $x_n > 0$ and that $b_n = b^*$, then 
\[
x_{n+1} = \max\{0, x_n\} = x_n,
\]
which implies that there is no change in the excess gas. In this case, it holds that
\[x_n = q^{-1} \ln(q^{-1} b^*).\]
This can be also seen in the example of  \Cref{fig:amm_excess_gas}, where the valuations are drawn uniformly at random from $[200,230]$ with $\lambda=1$, and we use an AMM-based mechanism with $q=\frac{1}{10}$. The market clearing price $b^* = 215$, and the excess gas converges to $10 \cdot \ln (2150) \approx 76.73$.

\subsubsection{Dynamic Posted-Price Mechanisms:}
Another approach to transaction fee mechanisms that has been proposed in the literature \cite{ferreira2021dynamic} is the following valuation-based or \emph{welfare-based} approach
\begin{equation}\label{eq:welfare}
b_{n+1} = \frac{\alpha}{kT} \sum_{i=1}^{g_n} v_{i,n} + (1-\alpha) b_{n}\tag{WEL}
\end{equation}
and the corresponding \emph{truncated welfare-based} alternative:
\begin{equation}\label{eq:truncated}
b_{n+1} = \left\{ \begin{array}{ll} 
\alpha (1+\delta) b_{n} + (1-\alpha) b_{n}, \qquad\text{if } g_n = kT\\  
\dfrac{\alpha}{kT} \displaystyle\sum_{i=1}^{g_n} \min\{v_{i,n},(1+\delta) b_{n}\} + (1-\alpha) b_{n}, & \text{else.} \\  
\end{array} \right. \tag{T-WEL}
\end{equation}
where $kT$ denotes the block capacity. These Dynamic Posted Price mechanisms condition their updates also on the $v_{i,n}$, i.e., the valuations of the included transactions at time $n$. However, their scope is to maximize welfare rather than meeting a specific target $T$. Accordingly, condition $(GA.1)$ is irrelevant in this case. However, assuming that valuations are bounded, it is again straightforward to show that these mechanisms satisfy $(GA.2)$.

\subsubsection{Effective Gas Price Correction Update Rule:}
In some cases, when demand shifts abruptly upwards, the EIP-1559 transaction fee market reverts back to a first price auction (FPA). During such times, e.g., during demand bursts due to NFT drops or other reasons, the prevailing base fee is too low to control the changing demand. This results in a series of full blocks in which all included transactions pay unreasonably high tips (FPA conditions) before the base fee reaches the appropriate level to achieve half-full blocks.\par
To counter the above phenomenon several alternative mechanisms have been proposed \cite{ferreira2021dynamic,Rei21}. However, these mechanisms treat deviations from the target block size equally (treat all kinds of deviations equally) or rely their correction terms again on the previous block-sizes.\par
However, FPA conditions are unlike any other deviations from target block-sizes due to the distinctive characteristic of very high tips to miners, i.e., of very high effective gas prices in comparison to the prevailing base fee. Next, we use this intuition to design a modified EIP-1559 mechanism that precisely aims to mitigate such FPA conditions.\par

Let $m_n$ denote the minimum effective gas price paid at block height $t$, and let  $\delta_n:= m_n - b_n$. By definition, $\delta_n>0$. If $\delta_n$ is large, e.g., $\delta_n>b_n$, then this means that all transactions in block $t$ were selected under first price auction (FPA) conditions. By contrast, if $\delta_n$ is small, e.g., $\delta_n< 0.1b_n$, then this means that the current base fee is correct for at least some users. In the first case, (large $\delta_n$), the mechanism needs to aggressively update $b_n$, whereas in the second case, the mechanism is ok with a normal update. This motivates the following modification of the standard EIP-1559 update rule

\[b_{n+1} =b_n \l 1+d\cdot \frac{g_n-T}{T} \r +\lambda \cdot \relu(\delta_n -\gamma b_n),\]
where $\relu(x):=\max{(0,x)}$. Since $\delta_n=m_n-b_n$, we can rewrite the above as 

\[b_{n+1} =\underbrace{b_n \l 1+d\cdot \frac{g_n-T}{T} \r}_{\text{EIP-1559}} +\underbrace{\lambda \cdot \relu(m_n -(1+\gamma)b_n) \vphantom{\left(\frac{a^{0.3}}{b}\right)} }_{\text{EGP Correction}},\]
where $m_n$ is the minimum effective gas price of block $t$. We will refer to this update rule as the \emph{Effective Gas Price Correction Update Rule} (EGP-CURe). In  comparison to EIP-1559, EGP-CURe has two additional (learnable) parameters: (i) $\gamma$ which is the threshold that triggers the effective gas price correction mechanism and $\lambda$ which calibrates the intensity of the updates in case that the correction mechanism has been triggered. Note that, similar to $g_n$, $m_n$ can be directly estimated from the observable data that is included in block $t$.

\paragraph{EGP-CURe is not PFUR}
If we assume that user valuations are bounded by some constant $M$, then the EGP-CURe update rule, $b_{n+1}:=h(b_n)$ defined above, can be easily shown to satisfy the generalized (A.2$'$) condition. 

\begin{lemma}
Assume that user valuations are bounded by some constant $M$. Then, the EGP-CURe defined above by $b_{n+1}:=h(b_n)$ satisfies generalized bounded relative differences condition, (A.2$'$).
\end{lemma}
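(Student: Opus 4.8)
The plan is to verify \textbf{(A.2$'$)} directly, exploiting its additive slack: unlike the multiplicative condition (A.2), it only requires an affine two-sided envelope $\alpha^{-1} b_n - \beta \le h(b_n) \le \alpha b_n + \beta$, so it suffices to produce a lower bound that is a positive multiple of $b_n$ and an upper bound of the form (a multiple of $b_n$) plus (a constant independent of $b_n$). I would therefore split the EGP-CURe update additively into its EIP-1559 component $b_n\l 1 + d\cdot\frac{g_n-T}{T}\r$ and its nonnegative correction component $\lambda\cdot\relu(m_n-(1+\gamma)b_n)$, and bound each piece separately before recombining.

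For the EIP-1559 component, I would use that block sizes obey $g_n\in[0,kT]$ (from \eqref{eq:gas_use}), so the normalized deviation lies in $\frac{g_n-T}{T}\in[-1,k-1]$. This immediately sandwiches the first term between two positive multiples of $b_n$, namely $(1-d)\,b_n \le b_n\l 1 + d\cdot\frac{g_n-T}{T}\r \le \l 1 + d(k-1)\r b_n$, where $1-d>0$ since $d\in(0,1)$. The correction term has the trivial lower bound $0$ because $\relu(\cdot)\ge 0$, so the whole rule satisfies $h(b_n)\ge(1-d)b_n$ with no further work.

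The crux — and the only place the boundedness hypothesis is used — is an \emph{absolute} upper bound on the correction term. Here I would invoke the observation that a rational user never pays more than her max fee: the effective gas price of any included transaction equals $b_n+\min\{v_i-b_n,\,p\}\le v_i\le M$. Consequently the \emph{minimum} effective gas price satisfies $m_n\le M$, which gives $\relu(m_n-(1+\gamma)b_n)\le m_n\le M$, and hence the correction term is uniformly bounded by $\lambda M$, independently of both $n$ and $b_n$. I expect this step — justifying $m_n\le M$ from the truthful-bidding model rather than any computation — to be the main (conceptual) obstacle; everything downstream is routine bookkeeping.

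Combining the two pieces yields $(1-d)b_n \le h(b_n) \le \l 1 + d(k-1)\r b_n + \lambda M$. Choosing $\alpha=\max\{(1-d)^{-1},\,1+d(k-1)\}>1$ and any $\beta>0$ with $\beta\ge\lambda M$ then delivers \textbf{(A.2$'$)}: the lower envelope follows since $\alpha^{-1}\le 1-d$ gives $\alpha^{-1}b_n-\beta\le(1-d)b_n\le h(b_n)$, and the upper envelope follows since $\alpha\ge 1+d(k-1)$ and $\beta\ge\lambda M$. Finally, I would remark that this argument explains why one can only claim the relaxed (A.2$'$) and not (A.2): the additive term $\lambda\,\relu(\cdot)$ cannot be absorbed into any fixed multiple of $b_n$ as $b_n\downarrow 0$, which is exactly the role of the constant $\beta$.
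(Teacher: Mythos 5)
Your proof is correct and takes essentially the same approach as the paper's: both isolate the nonnegative correction term $\lambda\relu(m_n-(1+\gamma)b_n)$, bound it by the constant $\lambda M$ using $m_n\le M$ (the only place where boundedness of valuations enters), and observe that the lower bound is trivial because the correction is nonnegative. The only cosmetic difference is that the paper splits into cases according to whether the correction is triggered (reducing the inactive case to plain EIP-1559), whereas you bound the $\relu$ term uniformly; if anything, your explicit choice of constants $\alpha=\max\{(1-d)^{-1},1+d(k-1)\}$ and $\beta\ge\lambda M$ is slightly more careful than the paper's.
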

\begin{proof}
If $m_n\le(1+\gamma)b_n$, then EGP-CURe degenerates to EIP-1559 which has been shown to satisfy the more restrictive (A.2). Thus, it remains to treat the case $m_n>(1+\gamma)b_n$. In this case, EGP-CURe becomes
\begin{align*}
b_{n+1} & =b_n \l 1+d\cdot \frac{g_n-T}{T} \r +\lambda \cdot (m_n -(1+\gamma) b_n).
\end{align*}

and we have that 
\begin{align*}
b_{n+1}& = b_n \l 1+d\frac{g_n-T}{T} -\lambda(1+\gamma)\r +\lambda m_n \le b_n(1+d-\lambda(1+\gamma)) +\lambda M
\end{align*}
where the last inequality holds because of the bounded valuations assumption, i.e., $m_n<M$ for any $t\ge0$. This means that EGP-CURe satisfies condition (A.2$'$) with constants $\alpha\equiv (1+d-\lambda(1+\gamma))$ and $\beta\equiv \lambda \cdot M$. Showing that $h(b_n)$ is bounded from below is trivial (since the correction term is always nonnegative) which concludes the proof.
\end{proof}
Concerning condition (A.1), it is immediate to show that EGP-CURe satisfies $h(b_n)\ge b_n$ if $g_n\ge T$. However, it does not necessarily hold that $h(b_n)\le b_n$ if $g_n\le T$. For instance, if there are only few transactions in a block (less than half-full block), but they all pay very high tips (above the threshold), then the correction mechanism will be triggered and, under certain mathematical conditions, it will not allow the base fee to decrease (despite a less than full block).\par
Such a case may only be of theoretical interest, since it implies a clearly suboptimal behavior in which users pay unreasonably very high fees to get their transactions included in an otherwise non-full block. Nevertheless, if one wants to exclude such a case, one may additionally impose that the EGP correction term is triggered only if $g_n>T$, i.e., only if the block is more than half-full.

\section{Empirical Evaluation}\label{sec:empirical_evaluation}

In this section we describe the blockchain data from after EIP-1559 used to construct \Cref{fig:empirical_averages}.

\paragraph{Description of Data. }

To collect blockchain data, we have used Google 's BigQuery data service \cite{bigquery}. Our sample includes all blocks $b_n$ whose index $t$ is in $\{t^{-}, \ldots, t^{+}-1\}$ where $t^{-} = 12965000$ and $t^+ = 15200000$. In particular, $t^{+}$ is the index of the first block after EIP-1559, and block $t^+$ was mined on 23 July, 2022. We group the blocks of our sample into $M=447$ consecutive batches, such that each batch consists of $N=(t^{+} - t^{-})/M = 5000$ blocks. For each batch $i=1,\ldots,M$, let $t_i^{-} = t^{-} + N i$ and $t_i^{+} = t^{-} + N (i+1) - 1$ be the indices of the first and last block in batch $i$, respectively. We then define
$
y_i \triangleq \frac{1}{N} \sum_{t = t_i^{-}}^{t_i^{+}} g_n / T
$
as the average relative gas use in the blocks of batch $i$. We note that $g_n / T \in [0,1]$ for all $t$, so it must hold that $y_i \in [0, 1]$ as well. To calculate $y_i$ for each batch, we use the query from \Cref{fig:google_sql_query}. Our approach is fully reproducible -- anyone with a BigQuery account can use the query from \Cref{fig:google_sql_query} to reproduce the data in this section.

\paragraph{Analysis.}
We have visualized the data created by the query of \Cref{fig:google_sql_query} in \Cref{fig:empirical_averages}. The solid blue line in \Cref{fig:empirical_averages} depicts the evolution of $y_i$ across the observation period, whereas the red dashed line depicts the sample average $\mu_y = \frac{1}{M} \sum_{i=1}^M y_i$, which equals $0.5145$. We observe that the batch averages occur in a narrow band around $\mu_y$. From \Cref{fig:upper_bound_main}. we observe that the theoretical upper bound of the block size for $d=0.125$ equals $\l 1 - \frac{\ln(1.125)}{\ln(0.875)}\r^{-1} \approx 0.5313$. This shows that, as predicted by our analysis, the average block sizes both exceed the target relative block size of $0.5$, but not to a degree at which they would exceed our theoretical upper bound.

\begin{figure}
\centering
\framebox[0.96\textwidth]{
\hspace{0.05cm}
\begin{minipage}[t]{0.92\textwidth}
\begin{flushleft}
\footnotesize{\texttt{\querytxta{SELECT}  \\
% \tab \querytxta{SUM}(gas\_used)/\querytxtb{5000} \querytxta{AS} gas\_used, \\
% \tab \querytxta{SUM}(gas\_limit)/\querytxtb{5000} \querytxta{AS} gas\_limit, \\
\tab \querytxta{AVG}(gas\_used/gas\_limit) \querytxta{AS} block\_size, \\
% \tab \querytxta{STDDEV}(gas\_used/gas\_limit) \querytxta{AS} stdev, \\
% \tab \querytxta{COUNT}(gas\_used) \querytxta{AS} count, \\
% \tab bucket, \\
\tab \querytxta{MIN}(number) \querytxta{AS} min\_number, \\
% \tab \querytxta{MAX}(number) \querytxta{AS} max\_number \\
\querytxta{FROM} ( \\
\tab \querytxta{SELECT} \\
\tab \tab gas\_used, \\
\tab \tab   gas\_limit, \\
\tab \tab   number, \\
\tab \tab   \querytxta{FLOOR}(number/\querytxtb{5000}) as bucket \\
\tab   \querytxta{FROM} \querytxtc{{\bq}bigquery-public-data.crypto\_ethereum.blocks{\bq}} \\
\tab   \querytxta{WHERE} number >= \querytxtb{12965000} AND number < \querytxtb{15200000} \\
) \\ 
\querytxta{GROUP BY} bucket \querytxta{ORDER BY} bucket
}}
\end{flushleft}
\end{minipage}
}
\caption{The SQL query used in Google BigQuery to produce the data for \Cref{fig:empirical_averages}. }
\label{fig:google_sql_query}
\end{figure}

% \section{Data Collection and Additional Simulations}\label{app:experiments}
% In this lat part of the Appendix, we present the code snippet in sql that we used to collect our data from Google BigQuery (\Cref{sec:google_sql_query}), and systematic simulations of \eqref{eq:eip1559} \eqref{eq:exponential} for different values of the adjustment quotient, $d$, and different valuation distributions (\Cref{sub:systematic}). 

\section{Extensive Simulations}\label{app:experiments}

In this part, we present systematic simulations of the \eqref{eq:eip1559} and \eqref{eq:exponential} BFURs. We consider two bifurcation parameters: (1) the adjustment quotient, $d$, and (2) the range of valuations, which is parameterized by $w$ (see \Cref{tab:distributions}).

\paragraph{Adjustment quotient:} For the adjustment quotient, we allow values in the $(0,0.5]$ range, since lower values are not admissible and larger values are not relevant to practice (they result in extremely aggressive updates). We consider three types of distributions for user valuations which are presented in \Cref{tab:distributions}. To define the distributions, we use parameters, $m$ (equal or related to the mean), $w$ (related to standard deviation) and $a$ (only necessary for gamma distributions).
\begin{table}[!htb]
\setlength{\tabcolsep}{12pt}
    \centering
    \begin{tabular}{@{}ll@{\hskip 3pt}ll@{}}
    \textbf{Distribution} & \textbf{Parameters} & \textbf{$(\mu,\sigma^2)$} & \textbf{Support}\\
    \toprule
    Uniform & $m$ & $w^2/12$ & $[m-w/2,m+w/2]$\\
    Normal & $m$ & $w^2/16$ & $(m-w,m+w)$\\
    Gamma & $m-aw$ & $w^2a$ & $(m-aw,+\infty)$\\
    \bottomrule
    {}&{}
    \end{tabular}
    \caption{Distributions of users valuations that have been used in the simulations. For the normal distribution, the support refers to the $4\sigma$ interval ($>99.99\%$ of all values). For the gamma distribution, special cases occur when $a$ is an integer (Erlang distribution) and when $a=1$ (exponential distribution).}
    \label{tab:distributions}
    \vspace*{-0.5cm}
\end{table}
For the simulations below, we fix $m=210, w=20$ and for the gamma distribution, $a=0.5$, and allow $d$ to vary between $(0,0.5]$ as mentioned above. In all cases, we assume mean arrival rate $4$ times the target, $T$, i.e., two times the block size. The results are shown in \Cref{fig:systematic_d_lin,fig:systematic_d_exp}. The blue dots (trajectories of base fess and block sizes) show data from $Niter=100$ iterations after skipping $Nskip=200$ iterations. Allowing more iterations (either Nskip or Niter) does not produce different attractors which suggests that the plots show the steady state, i.e., what happens after the mechanism has converged. The initial base fee is set to $b_0=170$.\footnote{The Jupyter notebooks that have been used to generate these plots are available as supplementary material (or upon request).}

Qualitatively similar results are obtained for different values of parameters $m,w$ and $a$ (not presented here).

\paragraph{Range of valuations:} For this part, we fix $d$ to its default value, $d=0.125$, and vary $w$ which parameterizes the standard deviation (variance) of the distribution of users valuations. The range of users valuations has a significant effect on the stability of both BFURs since more (less) concentrated valuations tend to produce more (less) aggressive changes in block sizes. The rest of the parameters, i.e., Niter, Nskip, $b_0$, are the same as above. For all cases, we allow $w$ to vary between $0$ (very concentrated) and $20$ (less concentrated). The results are presented in \Cref{fig:systematic_w_lin,fig:systematic_w_exp}.

\clearpage
\begin{figure}[!th]
    \centering
    \includegraphics[width=0.91\linewidth]{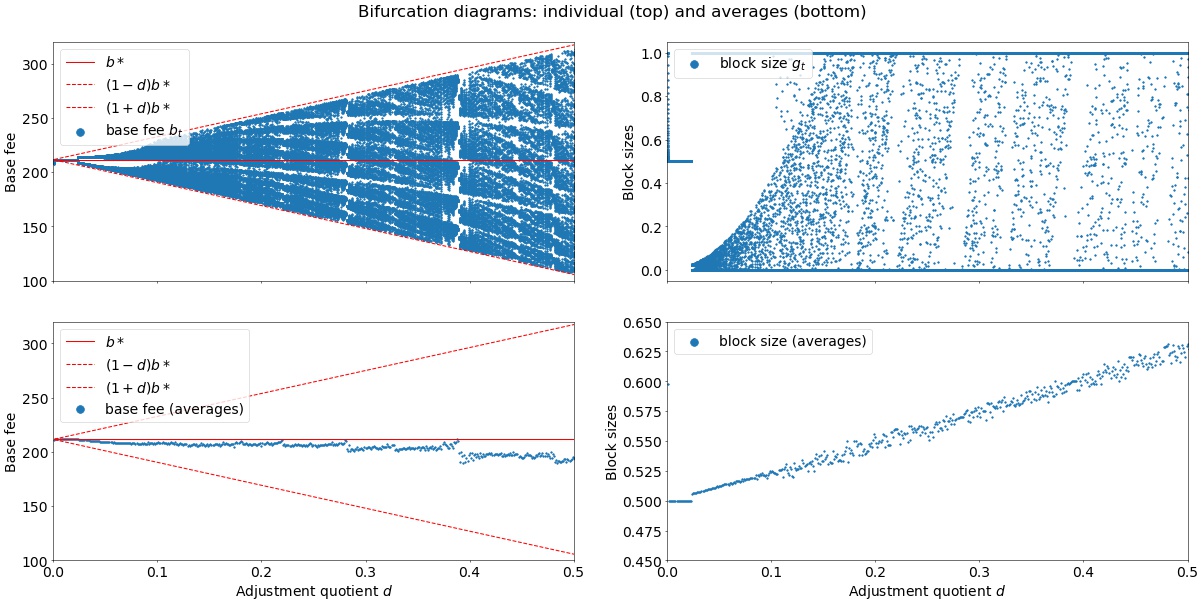}\\
    \includegraphics[width=0.91\linewidth]{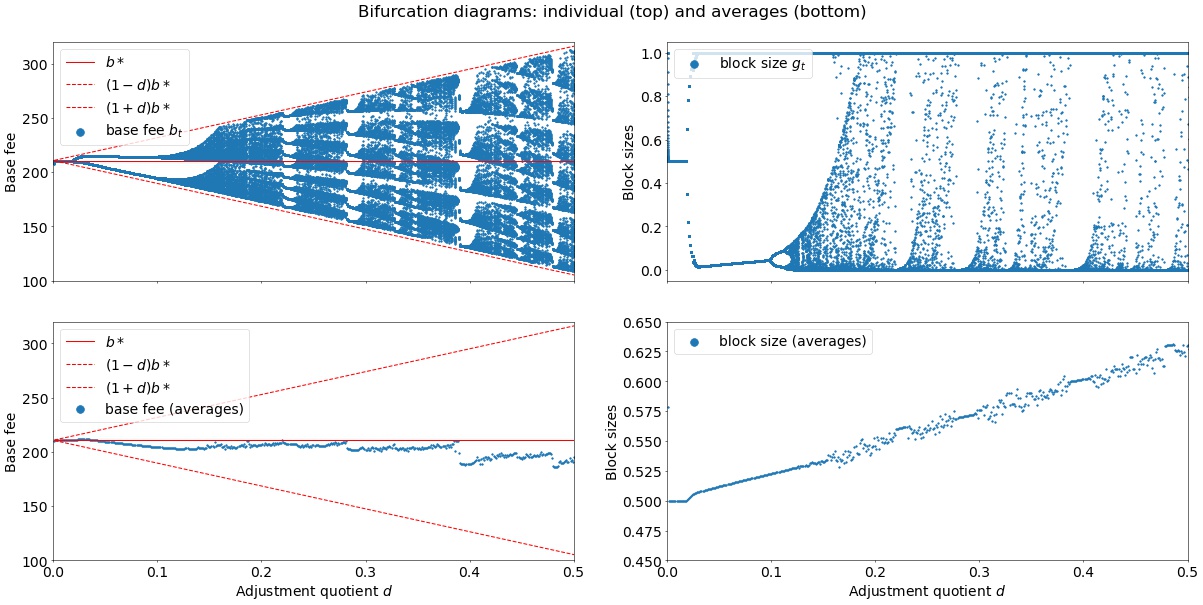}\\
    \includegraphics[width=0.91\linewidth]{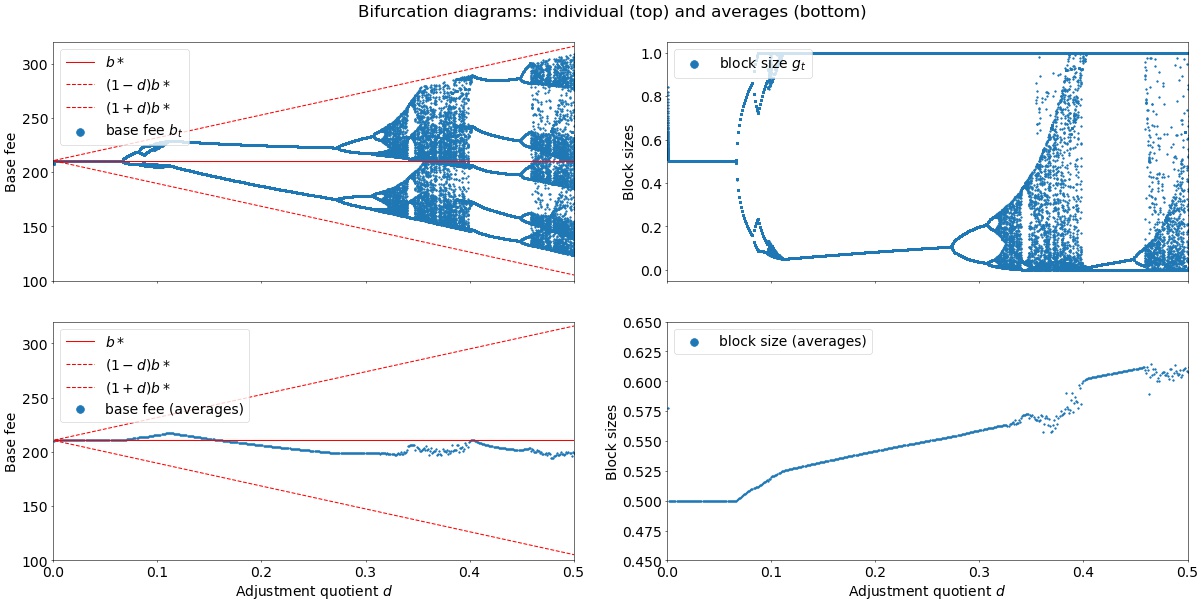}
    \caption{Simulations of \eqref{eq:eip1559} for different values of the adjustment quotient (bifurcation parameter, $d$) for uniform (top), normal (middle) and gamma (bottom) distributions. Note: similar to \Cref{fig:bifurcation_diagrams,fig:bifurcation_diagrams_exp}, the scale of the $y-$axis in block-size panels (top- and bottom-right) are between $0$ and $1$ (target block size is $0.5$).}
    \label{fig:systematic_d_lin}
\end{figure}

\begin{figure}[!th]
    \centering
    \includegraphics[width=0.91\linewidth]{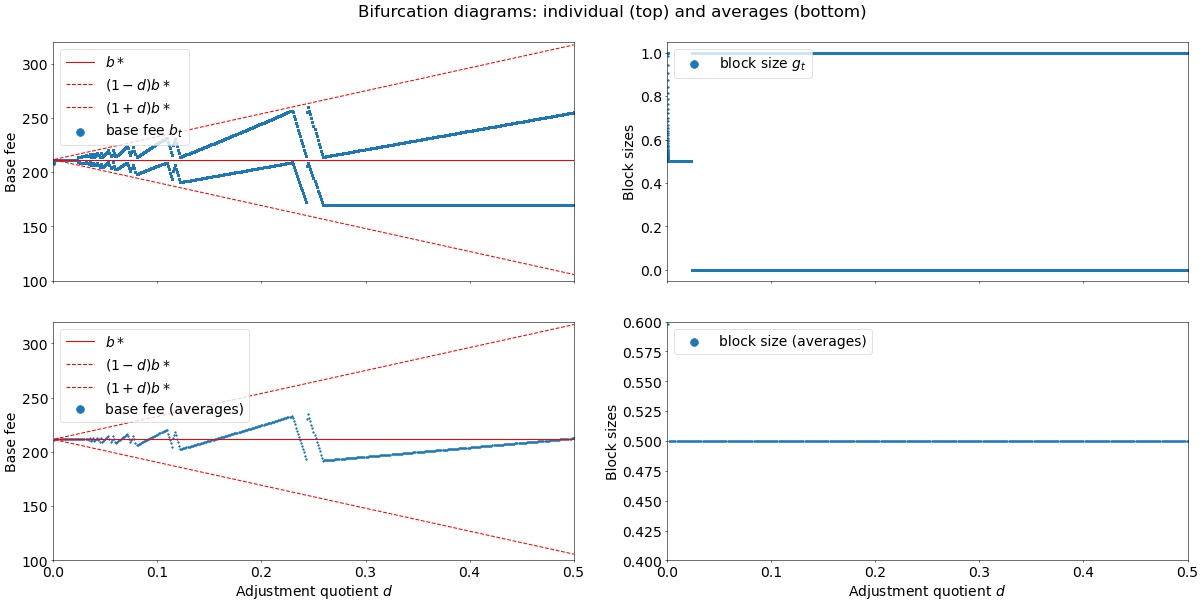}\\
    \includegraphics[width=0.91\linewidth]{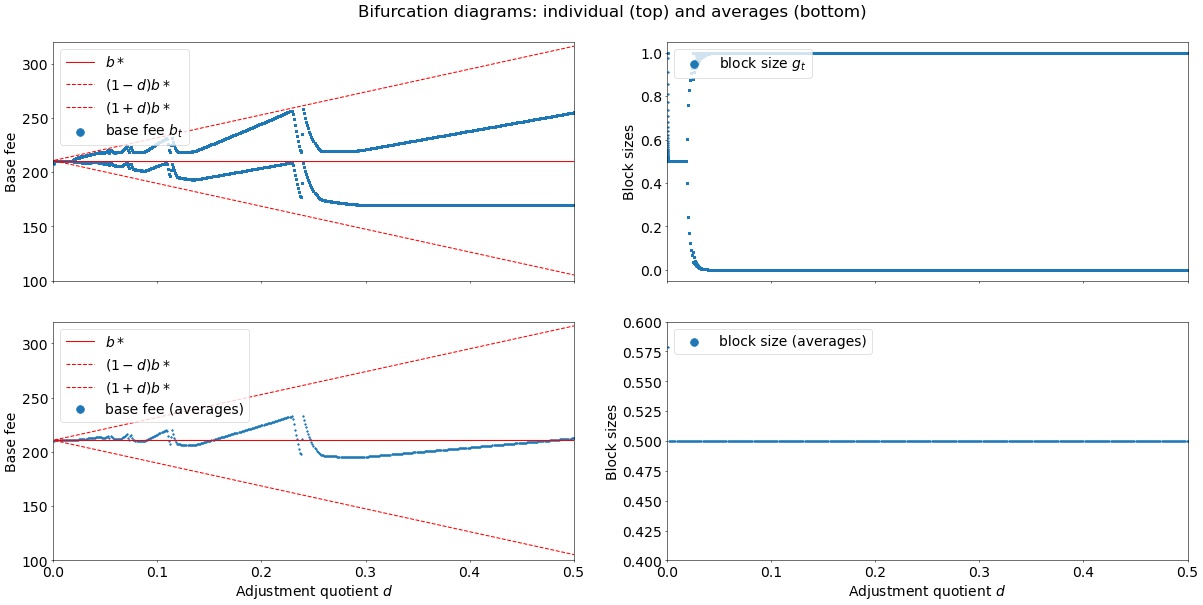}\\
    \includegraphics[width=0.91\linewidth]{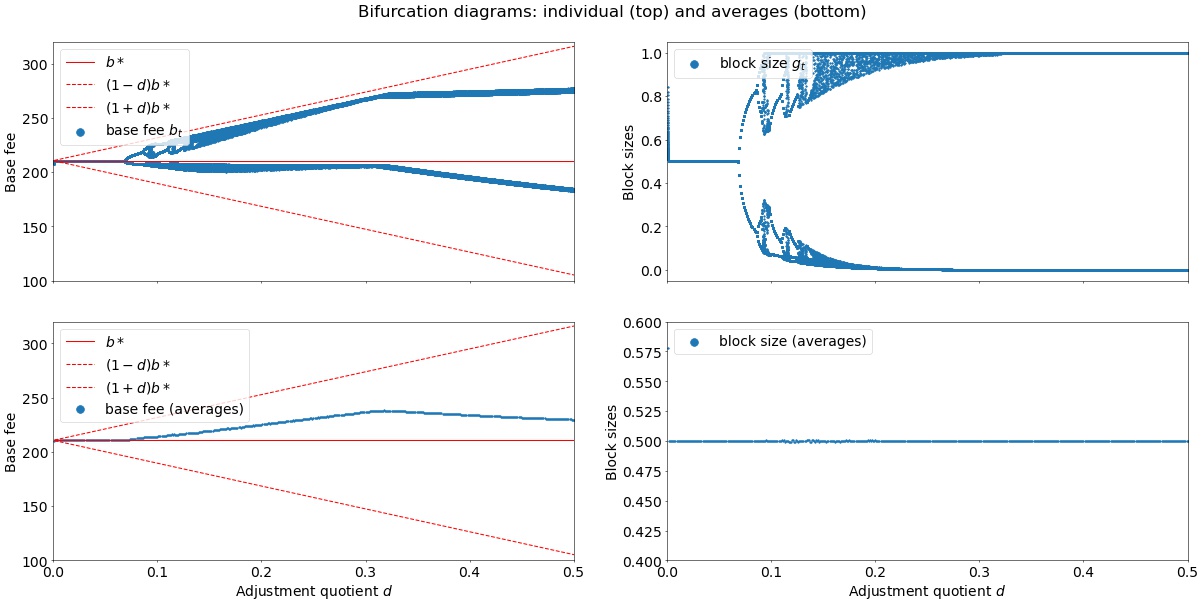}
    \caption{Simulations of \eqref{eq:exponential} for different values of the adjustment quotient (bifurcation parameter, $d$) for uniform (top), normal (middle) and gamma (bottom) distributions. The individual base fee trajectory depends on the initial conditions, but averages and block sizes are robust regardless of the exact base fee realization.}
    \label{fig:systematic_d_exp}
\end{figure}

\begin{figure}[!th]
    \centering
    \includegraphics[width=0.902\linewidth]{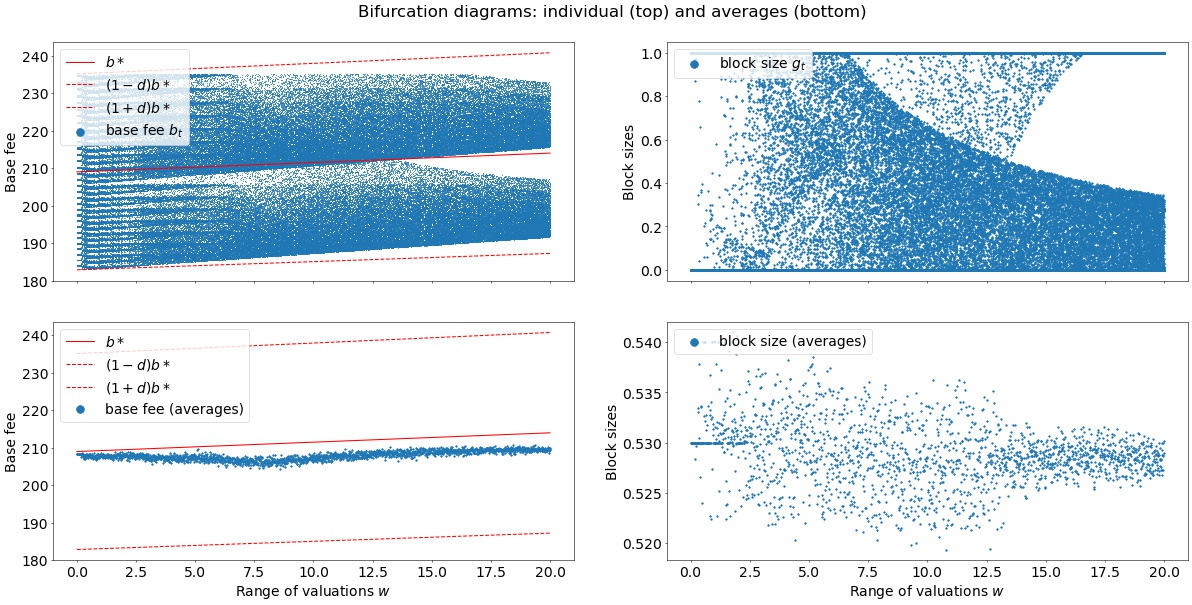}\\
    \includegraphics[width=0.902\linewidth]{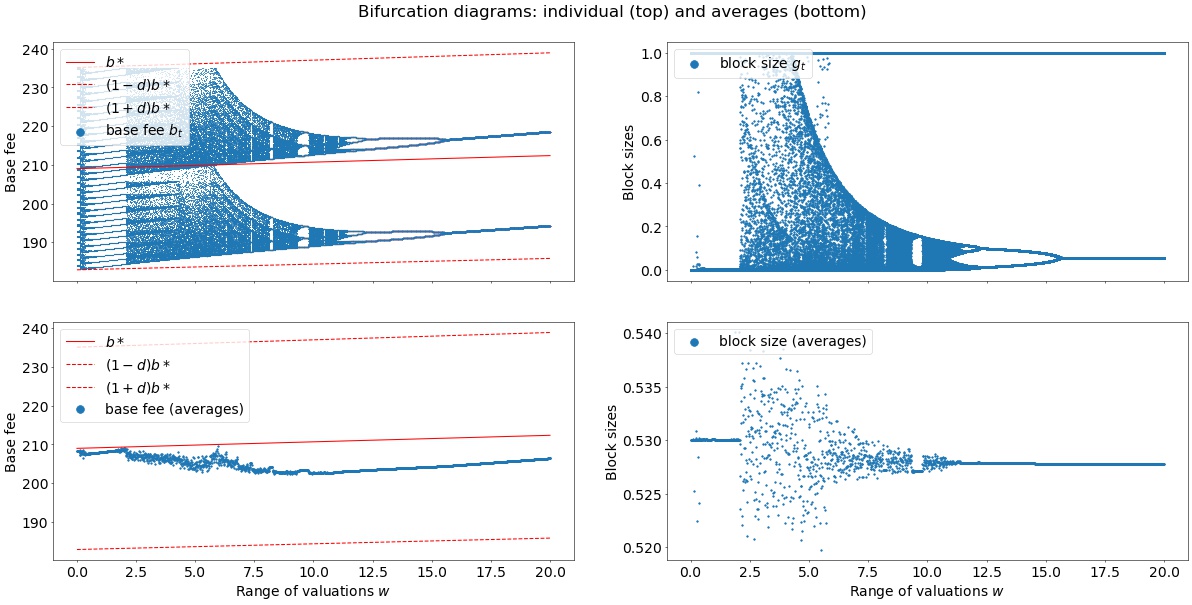}\\
    \includegraphics[width=0.902\linewidth]{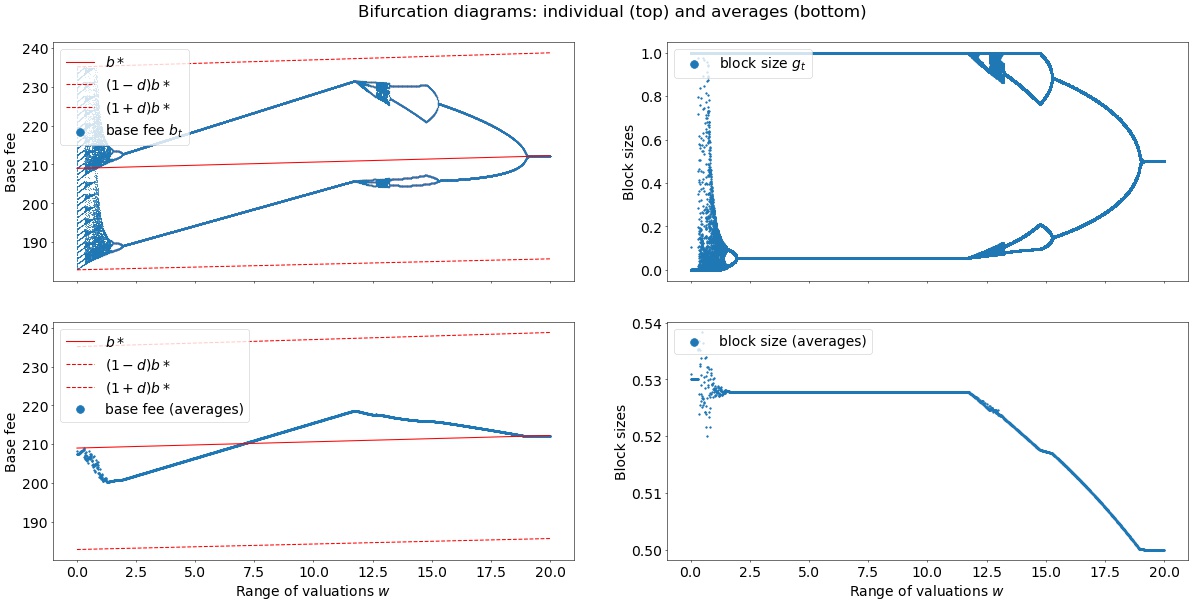}
    \caption{Simulations of \eqref{eq:eip1559} for default $d=0.125$ and uniform (top), normal (middle) and gamma (bottom) distributions with varying range of valuations (bifurcation parameter, $w$). The individual base fee trajectory depends on the initial conditions, but averages and block sizes are robust regardless of the exact base fee realization.}
    \label{fig:systematic_w_lin}
\end{figure}

\begin{figure}[!th]
    \centering
    \includegraphics[width=0.91\linewidth]{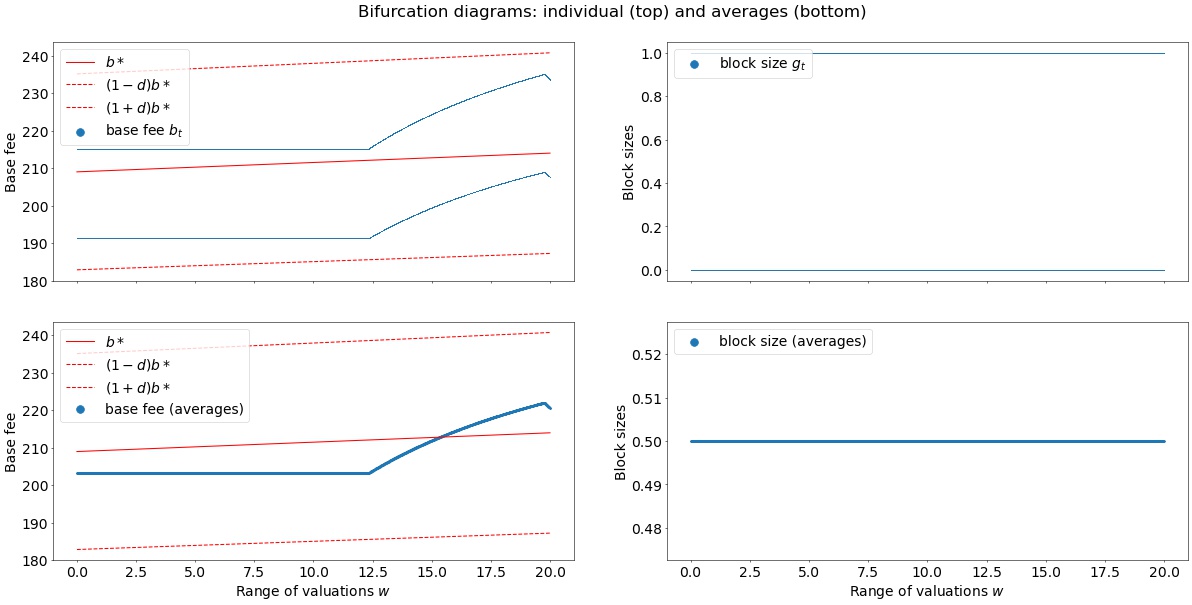}\\
    \includegraphics[width=0.91\linewidth]{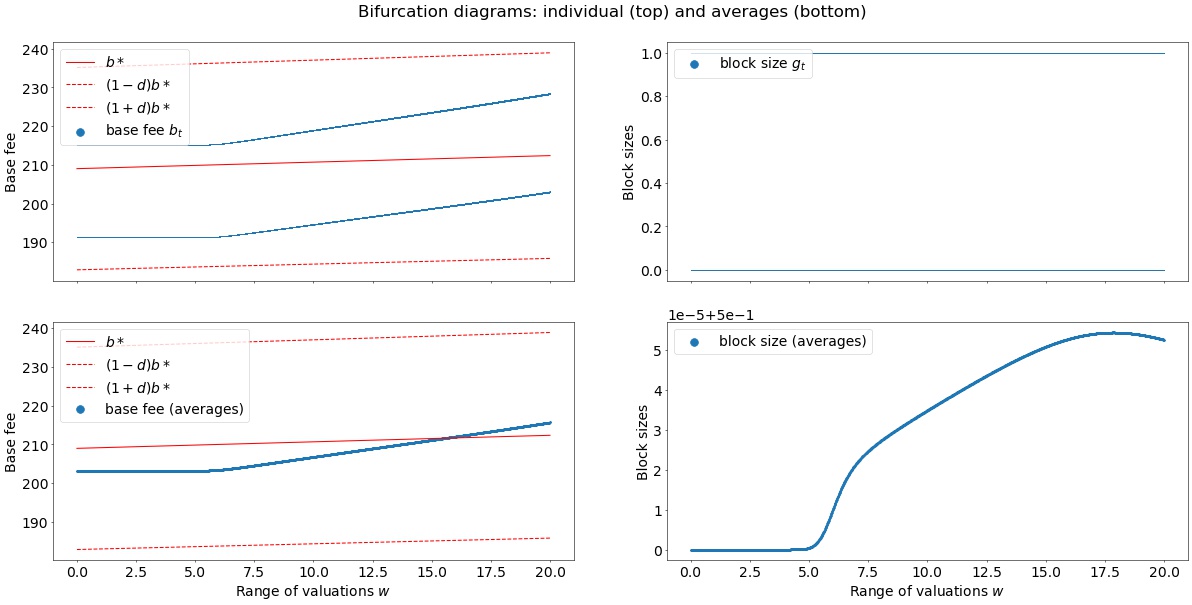}\\
    \includegraphics[width=0.91\linewidth]{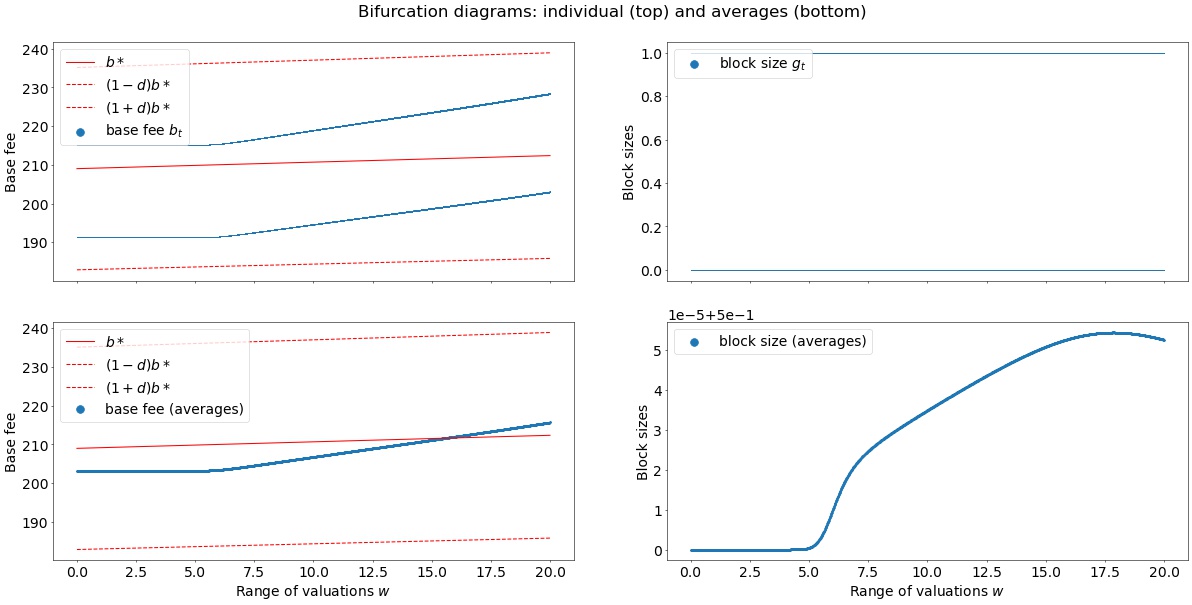}
    \caption{Simulations of \eqref{eq:exponential} for default $d=0.125$ and uniform (top), normal (middle) and gamma (bottom) distributions with varying range of valuations (bifurcation parameter, $w$). For the average block size plots (bottom left panels), note that $1e-5+5e-1 = 0.5001$.}
    \label{fig:systematic_w_exp}
\end{figure}

\end{document}